\DeclareFontShape{T1}{lmr}{m}{scit}{<->ssub*lmr/m/scsl}{}
\newcommand{\OPT}{\textsc{Opt}\xspace}
\newcommand{\ALG}{\textsc{Alg}\xspace}
\newcommand{\Mimic}{\textsc{Mimic}\xspace}
\newcommand{\Disc}{\textsc{Disc}\xspace}
\newcommand{\cost}{\textsc{cost}}
\newcommand{\val}{\textsc{val}}
\newcommand{\I}{\ensuremath{\mathcal{I}}\xspace}
\newcommand{\goodphases}{T}
\newcommand{\E}{\mathbf{E}}
\newcommand{\fresh}{R^{\mathrm{F}}}
\newcommand{\stale}{R^{\mathrm{S}}}
\newcommand{\previous}[1]{P(#1)}
\newcommand{\alpham}{\delta\xspace}
\newcommand{\ws}{w^\mathrm{S}}
\newcommand{\wf}{w^\mathrm{F}}
\newcommand{\g}[2]{\textsc{cost}_{#1}(#2)\xspace}
\newcommand{\gf}{g^\mathrm{F}}
\newcommand{\gs}{g^\mathrm{S}}
\newcommand{\ct}[2]{s_{#1}(#2)\xspace} 
\newcommand{\Dxi}[1]{\xi_{#1}}
\newcommand{\DC}[1]{B_{#1}}
\newcommand{\DE}[1]{C_{#1}}
\newcommand{\DD}[2]{D_{#1 #2}}
\newcommand{\DB}[2]{F_{#1 #2}}
\newcommand{\DG}[2]{G_{#1 #2}}
\newcommand{\DHH}[2]{H_{#1 #2}}
\newcommand{\DRP}[2]{U_{#1 #2}}
\newcommand{\DST}[2]{V_{#1 #2}}
\title{Traveling Repairperson, Unrelated Machines, and Other Stories About Average Completion Times}
\titlerunning{TRP, Unrelated Machines, and Other Stories About Average Completion Times}
\author{Marcin Bienkowski}{Institute of Computer Science, University of Wroc{\l}aw, Poland}{marcin.bienkowski@cs.uni.wroc.pl}{https://orcid.org/0000-0002-2453-7772}{}
\author{Artur Kraska}{Institute of Computer Science, University of Wroc{\l}aw, Poland}{artur.kraska@cs.uni.wroc.pl}{https://orcid.org/0000-0003-0973-787X}{}
\author{Hsiang-Hsuan Liu}{Utrecht University, Netherlands}{h.h.liu@uu.nl}{https://orcid.org/0000-0002-0194-9360}{}
\authorrunning{M. Bienkowski, A. Kraska and H.-H. Liu}
\keywords{traveling repairperson problem, dial-a-ride, machine scheduling,
unrelated machines, minimizing completion time, competitive analysis, factor-revealing LP}
\begin{document}

\maketitle

\begin{abstract}
We present a unified framework for minimizing average completion time for many
seemingly disparate \emph{online} scheduling problems, such as the traveling
repairperson problems (TRP), dial-a-ride problems (DARP), and scheduling on
unrelated machines. 

We construct a simple algorithm that handles all these scheduling
problems, by computing and later executing auxiliary schedules, each optimizing
a certain function on already seen prefix of the input. The optimized function
resembles a prize-collecting variant of the original scheduling problem. By a
careful analysis of the interplay between these auxiliary schedules, and later
employing the resulting inequalities in a factor-revealing linear program, we
obtain improved bounds on the competitive ratio for all these scheduling
problems. 

In particular, our techniques yield a $4$-competitive deterministic algorithm
for all previously studied variants of online TRP and DARP, and a $3$-competitive
one for the scheduling on unrelated machines (also with precedence constraints).
This improves over currently best ratios for these problems that are $5.14$
and~$4$, respectively. We also show how to use randomization to further
reduce the competitive ratios to $1+2/\ln 3 < 2.821$ and $1+1/\ln 2 < 2.443$,
respectively. The randomized bounds also substantially improve the current state
of the art. Our upper bound for DARP contradicts the lower bound of 3 given 
by Fink et al.~(Inf.~Process.~Lett.~2009); we pinpoint a flaw in their proof.
\end{abstract}


\section{Introduction}

In the traveling repairperson problem (TRP)~\cite{SahGon76}, requests arrive in
time at points of a~metric space and they need to be eventually serviced. In the
same metric, there is a mobile server, that can move at a constant speed. The
server starts at a distinguished point called the origin. A request is considered
serviced once the server reaches its location; we call such time its
\emph{completion time}. The goal is to minimize the sum (or equivalently the
average) of all completion times. We focus on a weighted variant, where all
requests have non-negative weights and the goal is to minimize the weighted sum
of completion times. 

A natural and well-studied extension of the TRP problem is a so-called
\emph{dial-a-ride problem} (DARP) \cite{PaLSSS04}, where each request has a
source and a destination and the goal is to transport an~object between these
two points. There, the server may have a fixed capacity limiting the number of
objects it may carry simultaneously; this capacity may be also infinite. For the
finite-capacity case, one can also distinguish between preemptive variant, where
objects can be unloaded at some points of the metric space (different than
their destination) and non-preemptive variant, where such unloading is not
allowed.

A seemingly disparate problem is scheduling on $m$ unrelated machines
\cite{GrLaLK79}. There, weighted jobs arrive in time, each with a vector of
size $m$ describing execution times of the job when assigned to a given
machine. A single machine can execute at most one job at a~time. The goal
is to assign each job (at or after its arrival) to one of the machines to
minimize the weighted sum of completion times. This problem comes in two
flavors: in the preemptive one, job execution may be interrupted and picked up
later, while in the non-preemptive one, such interruption is not possible. As
an extension, each job may have precedence constraints, i.e., can be executed
only once some other jobs are completed.

\subparagraph{Online Algorithms.}

Our focus is on natural \emph{online} scenarios of TRP, DARP~\cite{FeuSto01},
and machine scheduling~\cite{HaScSW97}. There, an online algorithm \ALG, at
time~$t$, knows only requests/jobs that arrived before or at time $t$. The
number of requests/jobs is also not known by an~algorithm a~priori. We say that
an online algorithm \ALG is $c$-competitive if for any request/job sequence~$\I$ it
holds that $\cost_{\ALG}(\I) \leq c \cdot \cost_{\OPT}(\I)$, where $\OPT$ is a
cost-optimal \emph{offline} solution for~$\I$. For a~randomized algorithm
$\ALG$, we replace its cost by its expectation. The competitive ratio of \ALG is
the infimum over all values $c$ such that \ALG is
$c$-competitive~\cite{BorEl-98}.
 
In this paper, we present a unified framework for handling such online scheduling
problems where the cost is the weighted sum of completion times. We
present an algorithm \Mimic that yields substantially improved
competitive ratios for all the problems described above.


\subsection{Previous Work}

The currently best algorithms for the TRP, the DARP, and machine scheduling on
unrelated machines share a common framework. Namely, each of these algorithms
works in phases of geometrically increasing lengths. In each phase, it computes
and executes an auxiliary schedule for the requests presented so far. (In the
case of the TRP and DARP, the server additionally returns to the origin
afterward.) The auxiliary schedule optimizes a certain function, such as
maximizing the weight of served
requests~\cite{KrPaPS03,KrPaPS06,JaiWag06,BieLiu19,HaScSW97,CPSSSW96} or
minimizing the sum of completion times with an additional penalty for non-served
requests~\cite{HwaJai18}.\footnote{Computing such auxiliary schedule usually
involves optimally solving an NP-hard task. This is typical for the area of
online algorithms, where the focus is on~information-theoretic aspects and not
on computational complexity. Algorithms presented in this paper also aim at
minimizing the achievable competitive ratio rather than minimizing the running
time.} Moreover, known randomized algorithms are also based on a common idea: they
delay the execution of the deterministic algorithm by a random
offset~\cite{KrPaPS03,KrPaPS06,CPSSSW96,HwaJai18}. We call these approaches
\emph{phase based}. The currently best results are gathered in
\autoref{tab:results}.

\subparagraph*{Traveling Repairperson and Dial-a-Ride Problems.}

The online variant of the TRP has been first investigated by Feuerstein
and Stougie~\cite{FeuSto01}. By adapting an algorithm for the cow-path problem
problem~\cite{BaCuRa93}, they gave a 9-competitive solution for line metrics. 
The result has
been improved by Krumke~et~al.~\cite{KrPaPS03}, who gave a
phase-based deterministic algorithm \textsc{Interval} attaining competitive
ratio of $3 + 2\sqrt{2} < 5.829$ for an arbitrary metric space. A
slightly different algorithm with the same competitive ratio was given by
Jaillet and Wagner~\cite{JaiWag06}. Bienkowski and
Liu~\cite{BieLiu19} applied postprocessing to auxiliary schedules, serving
heavier requests earlier, and improved the ratio to
$5.429$ on line metrics. Finally, Hwang and Jaillet proposed a phase-based
algorithm $\textsc{Plan-And-Commit}$~\cite{HwaJai18}. They give a computer-based
upper bound of $5.14$ for the competitive ratio and an analytical upper bound of
$5.572$.

Randomized counterparts of algorithms \textsc{Interval} and \textsc{Plan-And-Commit}  
achieve ratios of $3.874$~\cite{KrPaPS03,KrPaPS06} and $3.641$~\cite{HwaJai18},
respectively. Interestingly, the latter bound is not a direct randomization of
the deterministic algorithm, but uses a different parameterization, putting more
emphasis on penalizing requests not served by auxiliary schedules.

The phase-based algorithm \textsc{Interval} extends in a straightforward fashion
to the DARP problem with an arbitrary assumption on the server capacity, both
for the preemptive and non-preemptive variants: all the details of the solved
problem are encapsulated in the computations of auxiliary
schedules~\cite{KrPaPS03}. In the same manner, \textsc{Interval} can be enhanced
to handle $k$-TRP and $k$-DARP variants, where an algorithm has $k$ servers at
its disposal (also for any $k$, any server capacities, and any preemptiveness
assumptions)~\cite{BonSto09}. Although this was not explicitly stated
in~\cite{HwaJai18}, the algorithm \textsc{Plan-And-Commit} can be extended in
the same way.

From the impossibility side, Feuerstein and Stougie~\cite{FeuSto01} gave a~lower
bound for the TRP (that also holds already for a line) of $1 + \sqrt{2} > 2.414$,
while the bound of $7/3$ for randomized algorithms was presented by
Krumke~et~al.~\cite{KrPaPS03}. For the variant of the TRP with multiple servers,
the deterministic lower bound is only $2$~\cite{BonSto09} (it holds for any
number of servers). Clearly, all these lower bounds hold also for any variant of
DARP. For the DARP with a~single server of capacity $1$, the deterministic lower
bound can be improved to $3$~\cite{FeuSto01} and the randomized one to
$2.410$~\cite{KrPaPS03}.

The authors of~\cite{FiKrWe09} claimed a lower bound of $3$ for randomized $k$-DARP
(for any $k$). This contradicts the upper bound we present in this paper. In
\autoref{sec:flaw}, we pinpoint a flaw in their argument.

\subparagraph*{TRP and DARP: Related Results.}

Both online TRP and DARP problems were considered under different objectives,
such as minimizing the total makespan (when the TRP becomes online
TSP)~\cite{AsKrRa00,AuFLST94,AuFLST95,AuFLST01,BirDis20,BiDiSc19,BDHHLM17,BlKrPS01,ChDeLW19,JaMuSr19,JaiWag08,LiLPSS04}
or maximum flow time~\cite{HaKrRa00,KPPLMS05,KLLMPP02}.

The offline variants of TRP and DARP have been extensively studied both from the
computational hardness~(see, e.g., \cite{SahGon76,PaLSSS04}) and approximation
algorithms perspectives. In particular, the TRP, also known as the \emph{minimum
latency problem} problem, is NP-hard already on weighted trees~\cite{Sitter02}
(where the closely related traveling salesperson problem~\cite{Blae16} becomes
trivial) and the best known approximation factor in general graphs is
3.59~\cite{ChGoRT03}. For some metrics (Euclidean plane, planar graphs or
weighted trees) the TRP admits a~PTAS~\cite{AroKar03,Sitter14a}.


\begin{table}
\begin{center}
\begin{tabular}{r|l|l|l|l|}
	& \multicolumn{2}{c|}{deterministic} & \multicolumn{2}{c|}{randomized} \\
\hline
 & lower & upper & lower & upper \\
\hline
TRP  & $2.414$~\cite{FeuSto01} & $5.14$~\cite{HwaJai18} & $2.333$~\cite{KrPaPS03} & $3.641$~\cite{HwaJai18} \\
DARP & $3$~\cite{FeuSto01} & $5.14^*$~\cite{HwaJai18} & $2.410$~\cite{KrPaPS03} & $3.641^*$~\cite{HwaJai18} \\
$k$-TRP & $2$~\cite{BonSto09} & $5.14^*$~\cite{HwaJai18} & $2$~\cite{BonSto09} & $3.641^*$~\cite{HwaJai18} \\
$k$-DARP & $2$~\cite{BonSto09} & $5.14^*$~\cite{HwaJai18} & $2$~\cite{BonSto09} & $3.641^*$~\cite{HwaJai18} \\
$k$-TRP, $k$-DARP (all variants) & & $\mathbf{4}$                      &                         & $\mathbf{2.821}$ \\
\hline
\multirow{2}{*}{scheduling on unrelated machines} 
     & \multirow{2}{*}{1.309~\cite{Vestje97}} & $4$~\cite{HaScSW97} 
          & \multirow{2}{*}{$1.157$~\cite{Seiden00}} & $2.886$~\cite{CPSSSW96} \\
     &  & $\mathbf{3}$ & & $\mathbf{2.443}$\\
\hline
\end{tabular}
\end{center}
\caption{Previous and current bounds on the competitive ratios for the TRP and
the DARP problems. Asterisked results were not given in the referenced papers,
but they are immediate consequences of the arguments therein. All upper bounds
for the TRP/DARP variants hold for any number $k$ of servers, any server capacities,
both in the preemptive and the non-preemptive case. Upper bounds for scheduling hold
also in the presence of precedence constraints. Bounds proven in the current
paper are given in boldface.}
\label{tab:results}
\end{table}


\subparagraph{Machine Scheduling on Unrelated Machines.}

The first online algorithm for the scheduling on unrelated machines ($R|r_j|\sum
w_j C_j$ in the Graham et al.~notation~\cite{GrLaLK79}) was given by
Hall~et~al.~\cite{HaScSW97}. They gave 8-competitive polynomial-time algorithm,
which would be $4$-competitive if the polynomial-time requirement was lifted.
Chakrabarti et al.~showed how to randomize this algorithm, achieving the ratio
of $2/\ln 2 < 2.886$~\cite{CPSSSW96}. They also observe that both algorithms
can handle precedence constraints. The currently best deterministic lower of
1.309 is due to Vestjens~\cite{Vestje97}, and the best randomized one of 1.157
is due to Seiden~\cite{Seiden00}.

\subparagraph{Machine Scheduling: Related Results.}

While for unrelated machines, the results have not been beaten for the last 25
years, the competitive ratios for simpler models were improved substantially.
For example, for parallel identical machines, a~sequence of papers lowered the
ratio to $1.791$~\cite{CorWag09,SchSku02,MegSch04,Sitter10}.

The problem has also been studied intensively in the offline regime. Both weighted
preemptive and non-preemptive variants were shown to be
APX-hard~\cite{HoScWo01,Sitter17}. On the positive side, a $1.698$-approximation
for the preemptive case was given by Sitters~\cite{Sitter17}, and a 
$1.5$-approximation for the non-preemptive case by Skutella~\cite{Skutel98}. A
PTAS for a constant number of machines is due to Afrati et al.~\cite{ABCKKK99}.


\subsection{Resettable Scheduling}
\label{sec:resettable}

The phase-based algorithms for DARP variants and machine scheduling on unrelated
machines both execute auxiliary schedules, but the ones for the DARP variants
need to bring the server back to the origin between schedules. We call the
latter action \emph{resetting}. To provide a~single algorithm for all these
scheduling variants, we define a class of \emph{resettable scheduling} problems.

We assume that jobs are handled by an \emph{executor}, which has a set of possible
states. And at time $0$, it is in a distinguished \emph{initial state}. An input
to the problem consists of a~sequence of jobs~$\I$ released over time. Each job
$r$ is characterized by its arrival time $a(r)$, its weight~$w(r)$, and
possibly other parameters that determine its execution time. The executor cannot
start executing job $r$ before its arrival time $a(r)$. We will slightly abuse
the notation and use $\I$ to also denote the \emph{set} of all jobs from the
input sequence. There is a~problem-specific way of executing jobs and we use
$\ct{\ALG}{r}$ to denote the \emph{completion time} of a job by an algorithm
\ALG. The cost of an algorithm is defined as the weighted sum of job completion
times, $\g{\ALG}{\I} = \sum_{r \in \I} w(r) \cdot \ct{\ALG}{r}$. 

For any time $\tau$, let $\I_\tau$ be the set of jobs that appear till
$\tau$. An \emph{auxiliary $\tau$-schedule} is a problem-specific way of
feasibly executing a subset of jobs from $\I_\tau$. Such schedule starts at time
$0$, terminates at time $\tau$, and leaves no job partially executed. We require
that the following properties hold for any resettable scheduling problem. 
\begin{description}
\item[Delayed execution.] At any time $t$, if the executor is in the initial state,
it can execute an~arbitrary auxiliary $\tau$-schedule (for $\tau \leq t$). Such
action takes place in time interval $[t,t+\tau)$. Any job $r$ that would be
completed at time $z \in [0,\tau)$ by the $\tau$-schedule started at time $0$ is
now completed exactly at time $t + z$ (unless it has been already executed
before).

\item[Resetting executor.] Assume that at time $t$, the executor was in the
initial state, and then executed a $\tau$-schedule, ending at time $t+\tau$.
Then, it is possible to \emph{reset} the executor using extra $\gamma \cdot
\tau$ time, where $\gamma$ is a parameter characteristic to the problem. That is,
at time $t+(1+\gamma) \cdot \tau$, the executor is again in  its initial state.

\item[Learning minimum.] We define $\min(\I)$ to be the earliest time at which
\OPT may complete some job. 
We require that the value of $\min(\I)$ is learned by an online
algorithm at or before time $\min(\I)$ and that $\min(\I) > 0$.

\end{description}
We call scheduling problems that obey these restrictions 
$\gamma$-resettable.

\subparagraph{Example 1: Machine Scheduling is 0-Resettable.}

For the machine scheduling problem, the executor is always in the initial state,
and no resetting is necessary. As we may assume that processing of any job
takes positive time, $\min(\I) > 0$ holds for any input $\I$.

\subparagraph{Example 2  : DARP Problems are 1-Resettable.}

For the DARP variants, the executor state is the position of the algorithm
server, with the origin used as the initial state.\footnote{In the variants with
$k$ servers, the executor state is a $k$-tuple describing the positions of all
servers.} Jobs are requests for transporting objects and an auxiliary
$\tau$-schedule is a fixed path of length $\tau$ starting at the origin,
augmented with actions of picking up and dropping particular objects.\footnote{In
the preemptive variants, preemption is allowed \emph{inside} an auxiliary
schedule, provided that after a~$\tau$-schedule terminates, each job is either
completed or untouched.} It is feasible to execute a~$\tau$-schedule starting at
any time $t$ when the server is at the origin. In such case, jobs are completed
with an extra delay of $t$. Furthermore, right after serving the
$\tau$-schedule, the distance between the server and the origin is at
most~$\tau$. Thus, it is possible to reset the executor to the initial state
within extra time $1 \cdot \tau$. 

Finally, as we may assume that there are no requests that arrive at time
$0$ with both start and destination at the origin, $\min(\I) > 0$
for any input $\I$.


\subsection{Our Contribution}

In this paper, we provide a deterministic routine \Mimic and its randomized
version that solves any $\gamma$-resettable scheduling problem. It achieves
a deterministic ratio of $3+\gamma$ and a~randomized one of
$1+(1+\gamma)/\ln(2+\gamma)$.

That is, for $1$-resettable scheduling problems (the DARP variants with
arbitrary server capacity, an arbitrary number of servers, and both in the
preemptive and non-preemptive setting, or the TRP problem with an arbitrary
number of servers), this gives solutions whose ratios are at most $4$ and
$1+2/\ln 3 < 2.821$, respectively. For $0$-resettable scheduling problems (that
include scheduling on unrelated machines with or without precedence
constraints), the ratios of our solutions are $3$ and $1+1/\ln 2 < 2.443$.

In both cases, our results constitute a substantial improvement over currently
best ratios as illustrated in \autoref{tab:results}. Our result for the
scheduling on unrelated machines is the first improvement in the last 25 years
for this problem.

\subparagraph{Challenges and Techniques.}

$\Mimic$ works in phases of geometrically increasing lengths. At
the beginning of each phase, at time $\tau$, it computes an auxiliary
$\tau$-schedule that optimizes the total completion time of jobs seen so far
with an additional penalty for non-completed jobs: they are penalized as if
they were completed at time $\tau$. Then, within the phase it executes this
schedule and afterward it resets the executor. We obtain a~randomized variant
by delaying the start of \Mimic by an offset randomly chosen from a \emph{continuous}
distribution.

Admittedly, this idea is not new, and in fact, when we apply \Mimic to the TRP
problem, it becomes a slightly modified variant of
\textsc{Plan-And-Commit}~\cite{HwaJai18}. Hence, the main technical contribution
of our paper is a careful and exact analysis of such an approach. The crux here is
to observe several structural properties and relations among schedules produced
by \Mimic in consecutive phases, carefully tracking the overlaps of the job
sets completed by them. On this basis, and for a fixed number $Q$ of phases, we
construct a maximization linear program (LP), whose optimal value upper-bounds
the competitive ratio of \Mimic. Roughly speaking, the LP encodes, in a sparse
manner, an adversarially created input. To upper bound its value, we explicitly
construct a solution to its dual (minimization) program and show that its
value is at most~$4$ for any number of phases $Q$. 

Bounding the competitive ratio for the randomized version of \Mimic is
substantially more complicated as we need to combine the discrete world of an LP
with uncountably many random choices of the algorithm. To tackle this issue, we
consider an intermediate solution \Disc which approximates the random choice of
\Mimic to a given precision, choosing an~offset randomly from a discrete set of
$M$ values. This way, we upper-bound the ratio of \Mimic by $1+(1/M) \cdot
\sum_{j=1}^M (2+\gamma)^{j/M}$. This bound holds for an arbitrary value of~$M$,
and thus by taking the limit, we obtain the desired bound on the competitive
ratio. Interestingly, we use the same LP for analyzing both the deterministic and
the randomized solution.


\section{Deterministic and Randomized Algorithms: Routine MIMIC}
\label{sec:algorithm}

To describe our approach for $\gamma$-resettable scheduling, we start with
defining auxiliary schedules used by our routine \Mimic. The parameter $\gamma$
will be used to define partitioning of time into phases. Both our deterministic
and randomized solutions will run \Mimic, however, the randomized one will
execute it for a~random choice of parameters.

\subparagraph{Auxiliary Schedules.}

As introduced already in \autoref{sec:resettable}, an (auxiliary) $\tau$-schedule~$A$ 
describes a sequence of job executions, has the total duration $\tau$, and may be executed
whenever the executor is in the initial state. For the preemptive variants, we assume that 
once such a schedule terminates, each job is processed either completely or not at all.

For a fixed input $\I$, and a $\tau$-schedule $A$, we use $R(A)$ to denote the set of
jobs that would be served by $A$ if it was executed from time $0$, i.e., in
the interval $[0,\tau)$. For any set of jobs $R \subseteq R(A)$, let
\begin{align}
\textstyle    w(R) = \sum_{r \in R} w(r) && \text{and} &&
\textstyle \g{A}{R} = \sum_{r \in R} w(r) \cdot \ct{A}{r} .
\end{align}
Note that if a schedule $A$ serves all jobs from the input ($R(A) = \I$), 
then $\g{A}{R(A)}$ coincides with the cost of an algorithm that executes schedule 
$A$ at time $0$. 

Recall that $\I_\tau \subseteq \I$ denotes the set of jobs that arrive till time $\tau$.
For any $\tau$-schedule $A$, we define its value as 
\begin{equation}
\label{eq:def_val}
    \val_\tau(A) = \g{A}{R(A)} 
	+ \tau \cdot w \left( \I_\tau \setminus R(A) \right).
\end{equation}
The value corresponds to the actual cost of completing jobs from $\I_\tau$ by
schedule $A$ in interval~$[0, \tau)$, but we charge $A$ for unprocessed jobs as if they were completed 
at time $\tau$. 

\begin{definition}
\label{def:s_tau}
For any $\tau \geq 0$, let $S_\tau$ be the $\tau$-schedule minimizing function $\val_\tau$.
Ties are broken arbitrarily, but in a deterministic fashion.
\end{definition}

\subparagraph{Routine MIMIC.}

For solving the $\gamma$-resettable scheduling problem, we define routine
$\Mimic(\gamma,\omega)$, where $\omega \in (-1,0]$ is an additional parameter that controls the initial delay.
\begin{itemize}
\item Our deterministic algorithm is simply $\Mimic(\gamma, 0)$.
\item Our randomized algorithm first chooses a value $\omega$ uniformly at
random from the range~$(-1, 0]$. Then, it executes $\Mimic(\gamma, \omega)$.
\end{itemize}

Internally, $\Mimic(\gamma, \omega)$ uses a parameter $\alpha = 2 + \gamma$. It
splits time into phases in the following way. For any~$k$, let $\tau_k = \tau(k)
= \min(\I) \cdot \alpha^{k+\omega}$. The $k$-th phase (for $k \geq 1$) starts at
time $\tau_{k-1} = \min(\I) \cdot \alpha^{k-1+\omega}$ and ends at time $\tau_{k} =
\min(\I) \cdot \alpha^{k+\omega}$. The time interval $[0,\tau_0) =
[0,\alpha^{\omega} \cdot \min(\I))$ does not belong to any phase. As 
$\alpha^\omega \cdot \min(\I) \leq \min(\I)$, no jobs can be completed
within this interval, by the definition of $\min(\I)$ (see \autoref{sec:resettable}).

\Mimic does nothing till the end of phase~$1$ (till time $\tau_1 =
\alpha^{1+\omega} \cdot \min(\I)$). Since $\omega \geq -1$, we have $\tau_1 \geq
\min(\I)$. As $\Mimic$ learns the value of $\min(\I)$ latest at time $\min(\I)$,
it can thus correctly identify the value of $\tau_1$ before or at time $\tau_1$.

For a phase $k+1$, where $k \geq 1$, \Mimic behaves in the following way. We
ensure that at time $\tau_k$, at the beginning of phase $k+1$, \Mimic is in its
initial state. At this time, \Mimic computes the $\tau_k$-schedule~$S_{\tau(k)}$
(see \autoref{def:s_tau}), executes it within time interval $[\tau_k, 2 \cdot
\tau_k)$ and afterwards, it resets its state to the initial one. The execution
of $S_{\tau(k)}$ will not be interrupted or modified when new jobs arrive within
phase $k+1$. Furthermore, \Mimic serves only those requests from $S_{\tau(k)}$
it has not yet served earlier. The resetting part takes time $\gamma \cdot \tau_k$, and is thus
finished at time $(2+\gamma) \cdot \tau_k = \alpha\cdot \tau_k = \tau_{k+1}$ when
the next phase starts. An illustration is given in \autoref{fig:alg}.


\begin{figure}
\centering
\includegraphics[width=0.9\textwidth]{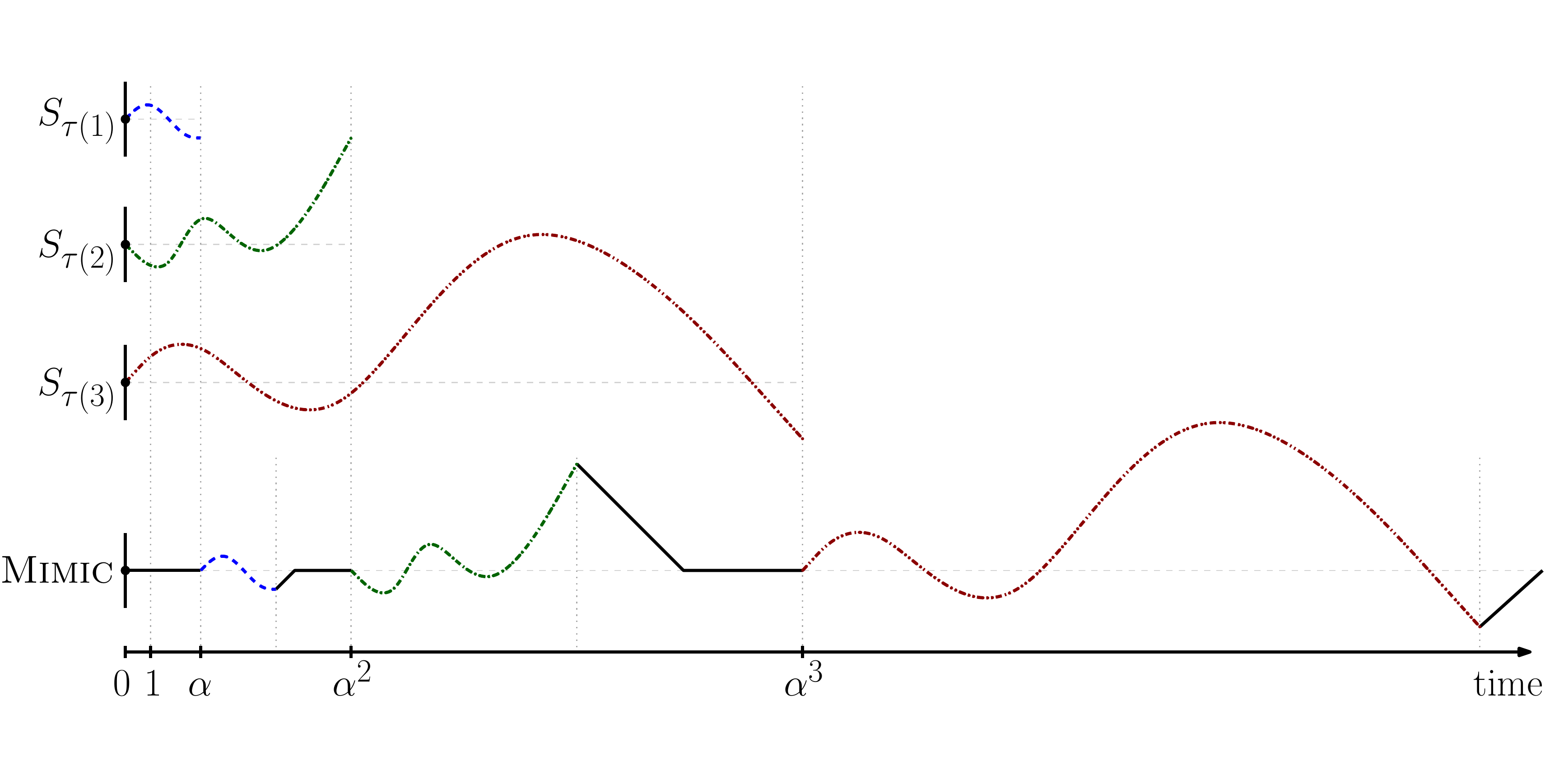}
\caption{An example execution of algorithm $\Mimic(1,0)$ applied for the TRP
problem (i.e., we use $\alpha=3$). We assume that $\min(\I) = 1$. Within time interval $[\tau(k) =
\alpha^k, 2\cdot \alpha^k)$ of phase $k+1$, \Mimic executes a $\tau(k)$-schedule
$S_{\tau(k)}$ that optimizes function $\val_{\tau(k)}$. Afterwards within time
interval $[2 \cdot \alpha^k, \tau(k+1) = 3 \cdot \alpha^k)$, \Mimic resets its
state to the initial one (the server of TRP returns to the origin).}
\label{fig:alg}
\end{figure}


\section{Intermediate Algorithm DISC}

As mentioned in the introduction, we introduce an additional intermediate
algorithm $\Disc$, whose analysis will allow us to bound the competitive ratios
of both our deterministic and randomized solution. For an integer $\ell$, we use
$[\ell]$ to denote the set $\{0,\ldots,\ell-1\}$.

$\Disc(\gamma,M,\beta)$ solves the $\gamma$-resettable scheduling problem, and is
additionally parameterized by a positive integer $M$, and a real number $\beta
\in (0, 1/M]$. 
$\Disc(\gamma, M, \beta)$ first chooses a~random integer $m \in [M]$.
Then, it executes $\Mimic(\gamma, \omega = -1 + m/M + \beta)$.
The main result of this paper is the following bound, whose proof
is will be given in the next two sections.

\begin{theorem}
\label{thm:main}
For any $\gamma$, any positive integer $M$, and any $\beta \in (0,1/M]$, the
competitive ratio of $\Disc(\gamma, M, \beta)$ for the $\gamma$-resettable
scheduling is at most $1 + (1/M) \cdot \sum_{j=1}^M (2+\gamma)^{j/M}$.
\end{theorem}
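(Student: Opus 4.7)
The plan is to bound the expected cost of $\Disc(\gamma, M, \beta)$ by analyzing, for each $m \in [M]$, the cost of the corresponding deterministic run $\Mimic(\gamma, \omega_m)$ with $\omega_m = -1 + m/M + \beta$, and then averaging. Fix an input $\I$ together with a cost-optimal offline schedule $\OPT$. For run $m$, the phase boundaries are $\tau_k^{(m)} = \min(\I) \cdot (2+\gamma)^{k+\omega_m}$; crucially, as the pair $(k,m)$ ranges over all values, these boundaries interleave into a single geometric progression with common ratio $(2+\gamma)^{1/M}$, which is precisely the base of the sum in the target bound.

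My first step is a per-phase cost decomposition. Within phase $k+1$ of run $m$, \Mimic executes $S_{\tau_k^{(m)}}$ on the interval $[\tau_k^{(m)}, 2\tau_k^{(m)})$, so any completed job contributes at most $\g{S_{\tau_k^{(m)}}}{R(S_{\tau_k^{(m)}})} + \tau_k^{(m)} \cdot w(R(S_{\tau_k^{(m)}}))$ to the cost; jobs not served by $S_{\tau_k^{(m)}}$ must wait for a later phase, adding a geometrically inflated delay. Using the optimality of $S_{\tau_k^{(m)}}$ for $\val_{\tau_k^{(m)}}$, I would compare it against a competitor $\tau_k^{(m)}$-schedule obtained by truncating $\OPT$ to its first $\tau_k^{(m)}$ time units, with unserved jobs incurring the penalty $\tau_k^{(m)} \cdot w(\cdot)$. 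This converts the algorithmic per-phase cost into expressions involving partial $\cost_{\OPT}$ values plus weighted quantities that the LP can manipulate.

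Next, I would encode these inequalities, together with the identity $\cost_{\OPT}(\I) = \sum_{r \in \I} w(r) \cdot \ct{\OPT}{r}$, as a maximization LP whose optimal value upper-bounds the competitive ratio. The primal variables would track the weight of jobs completed by $\OPT$ within each slice $[\tau_{k-1}^{(m)}, \tau_k^{(m)})$ and their contribution to $\cost_{\OPT}$; the constraints would express the per-phase bounds derived above. Averaging over $m$ contributes the $1/M$ factor, and the geometric interleaving of phase boundaries forces the LP coefficients to behave like powers of $(2+\gamma)^{1/M}$. Since the same LP structure parameterizes both the deterministic regime ($M=1$, $\beta = 1$, giving ratio $3+\gamma$) and the randomized one ($M \to \infty$, giving $1 + (1+\gamma)/\ln(2+\gamma)$), this unification is essentially free once the LP is set up carefully.

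The main obstacle will be exhibiting an explicit dual feasible solution with value $1 + (1/M) \sum_{j=1}^M (2+\gamma)^{j/M}$ that works for every number of phases simultaneously. A natural ansatz weights the per-phase constraint for slice index $j$ by a factor proportional to $(2+\gamma)^{j/M}$, mirroring the target sum; verifying dual feasibility then reduces to checking a geometric identity at each primal variable, with particular care at the junctions where a single job of $\OPT$ appears in the constraints of several overlapping runs~$m$, and at the extra $\gamma \cdot \tau_k^{(m)}$ resetting portion which shifts contributions across phase boundaries. Once feasibility is verified, LP duality delivers the stated bound uniformly in the phase count, and the theorem follows.
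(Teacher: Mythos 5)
Your high-level framework matches the paper's: both bound the \Disc-to-\OPT ratio via a factor-revealing maximization LP, use the optimality of the $\tau$-schedule $S_\tau$ against a truncated competitor to generate constraints, recognize that the phase boundaries over all $m \in [M]$ interleave into a geometric progression with ratio $(2+\gamma)^{1/M}$, and close the argument with an explicit dual feasible solution of value $M + \sum_{j=1}^M (2+\gamma)^{j/M}$. The $1/M$ factor from averaging over $m$ is also the same. So the architecture is right.

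However, there is a genuine gap in the per-phase cost decomposition, and it is precisely the paper's main technical contribution. You propose to bound the phase-$(k+1)$ cost by $\g{S_{\tau_k^{(m)}}}{R(S_{\tau_k^{(m)}})} + \tau_k^{(m)} \cdot w(R(S_{\tau_k^{(m)}}))$, summing over all jobs $R(S_{\tau_k^{(m)}})$ that the auxiliary schedule would serve. But \Mimic only serves the \emph{fresh} jobs in each phase, i.e.\ those not already completed by a previous schedule. The sets $R(S_{\tau_k})$ and $R(S_{\tau_{k+1}})$ overlap heavily (in the limit, the last schedule contains everything), so summing your bound across phases overcounts the algorithm's actual cost. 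This is not a looseness that washes out: if you do not subtract the stale contribution, the resulting LP value diverges with the number of phases and no dual solution of the claimed value exists. The paper handles this with the fresh/stale decomposition $\fresh(A_q)$ and $\stale(A_q)$, and the companion weight identity $\ws_q \leq \sum_{\ell \in \previous{q}} \wf_\ell$ (with equality for the final phases), which is what ties the algorithm's per-phase serving to what remains unserved. These become explicit LP constraints (inequalities \eqref{eq:weight-rel-1} and \eqref{eq:weight-rel-2}), and the objective is written purely in terms of the fresh quantities $\wf_{qj}, \gf_{qj}$. Without this, your LP objective is not the algorithm's cost.

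A second, smaller mismatch: you suggest the primal variables should track jobs as completed by $\OPT$ per time slice. The paper instead parameterizes everything by the auxiliary schedules $A_q$ (decomposed by sub-phase and by fresh/stale), and recovers $\cost_\OPT$ via the observation that the last $M$ schedules $A_{Q-M+1},\dots,A_Q$ are themselves optimal and complete all jobs (Lemma~\ref{lem:goodphases} and Lemma~\ref{lem:opt-cost}). This choice is not cosmetic: it is the auxiliary schedules, not \OPT, that determine what \Mimic actually serves in each phase, so the objective (algorithm cost) must be expressible in those variables. If you instead try to make \OPT-side variables carry the objective, you would need an additional mechanism to transfer the algorithm's per-phase serving decisions into \OPT's decomposition, which the current proposal does not provide. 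I would recommend re-centering the LP around the schedules $A_q$, adding the fresh/stale split, and then your dual ansatz (weights growing like $(2+\gamma)^{j/M}$) is essentially the right guess for the dual variables the paper constructs.
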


\begin{corollary}
For any $\gamma$, the competitive ratio of our $\Mimic$-based 
deterministic solution is at most $3+\gamma$
and the ratio of randomized one at most $1+(1+\gamma)/\ln(2+\gamma)$.
\end{corollary}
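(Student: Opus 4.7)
The plan is to derive both bounds directly from \autoref{thm:main}: the deterministic one by a single specialization, and the randomized one by averaging over the auxiliary parameter $\beta$ and then taking a limit.

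For the deterministic bound $3+\gamma$, I would instantiate \autoref{thm:main} with $M=1$ and $\beta=1$. The index set $[1]=\{0\}$ is a singleton, so $\Disc(\gamma,1,1)$ picks $m=0$ deterministically and runs $\Mimic(\gamma,\,-1+0/1+1) = \Mimic(\gamma,0)$, which is exactly the deterministic solution described in \autoref{sec:algorithm}. The bound of \autoref{thm:main} collapses to $1 + (2+\gamma)^{1} = 3+\gamma$, which is the claimed ratio.

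For the randomized bound, the key observation is that drawing $\omega$ uniformly from $(-1,0]$ can be decomposed as a two-stage uniform sampling: first pick $m \in [M]$ uniformly, then pick $\beta$ uniformly in $(0,1/M]$, and set $\omega = -1 + m/M + \beta$. By Fubini and a change of variables, averaging the expected cost of $\Disc(\gamma,M,\beta)$ over $\beta \in (0,1/M]$ produces exactly $\E_\omega[\cost_{\Mimic(\gamma,\omega)}(\I)]$, the expected cost of the randomized algorithm. Since the bound of \autoref{thm:main} does not depend on $\beta$, it survives this averaging, yielding
\begin{equation*}
    \E_\omega[\cost_{\Mimic(\gamma,\omega)}(\I)] \le \left(1 + \frac{1}{M}\sum_{j=1}^{M}(2+\gamma)^{j/M}\right) \cdot \cost_{\OPT}(\I)
\end{equation*}
for every positive integer $M$. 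The right-hand sum is the right Riemann sum for $\int_0^1 (2+\gamma)^x\,dx$ on the uniform partition of $[0,1]$, so as $M \to \infty$ it converges to $((2+\gamma)-1)/\ln(2+\gamma) = (1+\gamma)/\ln(2+\gamma)$. Passing to the limit yields the desired ratio $1 + (1+\gamma)/\ln(2+\gamma)$.

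The main point requiring some care is the Fubini/change-of-variables step, which needs $\omega \mapsto \cost_{\Mimic(\gamma,\omega)}(\I)$ to be a measurable, bounded function on $(-1,0]$. This should be straightforward: on a fixed finite input $\I$, the parameter $\omega$ only enters through the phase boundaries $\tau_k = \min(\I) \cdot (2+\gamma)^{k+\omega}$, and the auxiliary schedules $S_{\tau_k}$ change only at the finitely many values of $\omega$ at which a job-arrival time coincides with a phase boundary. Hence the cost is a piecewise-continuous, bounded function of $\omega$, so both the Fubini step and the Riemann-sum limit are fully justified, completing the reduction.
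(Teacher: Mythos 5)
Your proof is correct and follows essentially the same route as the paper: specialize \autoref{thm:main} at $M=1$, $\beta=1$ for the deterministic bound, decompose the uniform draw of $\omega$ into a uniform $m\in[M]$ plus a uniform $\beta\in(0,1/M]$ for the randomized one, then let $M\to\infty$ to recover the Riemann integral $\int_0^1(2+\gamma)^x\,dx=(1+\gamma)/\ln(2+\gamma)$. The only addition is your explicit remark on measurability of $\omega\mapsto\cost_{\Mimic(\gamma,\omega)}(\I)$, which the paper leaves implicit but which is a reasonable point to flag.
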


\begin{proof}
Let $\xi_M = 1 + (1/M) \cdot \sum_{j=1}^M \alpha^{j/M}$. First, we note that
$\Disc(\gamma, M=1, \beta=1)$ chooses deterministically $m = 0$ and executes
$\Mimic(\gamma, \omega=-1+0+1=0)$, i.e., is equivalent to our deterministic
algorithm. Hence, by \autoref{thm:main}, the corresponding competitive ratio is
at most $\xi_1 = 3+\gamma$.

For analyzing our randomized algorithm, we observe that instead of choosing
a~random~$\omega \in (-1,0]$, we may choose a random integer $m \in [M]$ and a
random real $\beta \in (0,1/M]$ and set $\omega = -1 + m/M + \beta$. Thus, for
any fixed integer $M$, our randomized algorithm is equivalent to choosing random
$\beta \in (0,1/M]$ and running~$\Disc(\gamma, M, \beta)$.

Fix any input $\I$. By \autoref{thm:main}, $\E_m[\cost_{\Disc(\gamma, M,
\beta)}(\I)] \leq \xi_M \cdot \cost_{\OPT}(\I)$ holds for any $\beta \in (0,1/M]$,
where the expected value is taken over random choice of $m$. Clearly, this
relation holds also when $\beta$ is chosen randomly, i.e.,
$\E_\omega[\cost_{\Mimic(\gamma,\omega)}] = \E_\gamma \E_m[\cost_{\Disc(\gamma,
M, \beta)}(\I)] \leq \xi_M \cdot \cost_{\OPT}(\I)$. As the bound is valid for any
$M$, and the competitive ratio of our randomized algorithm is at most $\inf_{M
\in \mathbb{N}} \{ \xi_M \} = \lim_{M \to \infty} \xi_M = 1 + (1+\gamma)/
\ln(2+\gamma)$. 
\end{proof}


\section{Structural Properties of DISC}

In this section, we build relations useful for analyzing the performance of 
$\Disc(\gamma, M, \beta)$ on any instance $\I$ of the $\gamma$-resettable scheduling problem.

We start by presenting structural properties of schedules~$S_\tau$. We note
that even if there exists a $\tau$-schedule~$A$ that completes all jobs from $\I$,
$S_\tau$ may leave some jobs untouched. However, a sufficiently
long schedule $S_\tau$ completes all jobs.

\begin{lemma}
\label{lem:goodphases}
Fix any input $\I$. There exists a value $\goodphases_\I$, such that 
for any $\tau \geq \goodphases_\I$, $S_\tau$ completes all jobs of $\I$ and is 
an~optimal (cost-minimal) solution for $\I$.
\end{lemma}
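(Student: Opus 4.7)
The plan is to show that once $\tau$ is large enough to accommodate a full offline optimum and to dominate its cost by the per-job penalty $\tau\cdot w_{\min}$, the minimizer $S_\tau$ must serve every job and match $\cost_{\OPT}(\I)$ in cost.

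Let $T_1 = \max_{r\in\I} a(r)$, so that $\I_\tau = \I$ for every $\tau \geq T_1$; fix a cost-optimal offline schedule $A^\star$ that serves all of $\I$, of total duration $T^\star$ and cost $C^\star = \cost_{\OPT}(\I)$; and set $w_{\min} = \min\{w(r) : r \in \I,\ w(r) > 0\}$ (the statement is trivial when every weight is zero, as any schedule then has cost $0$). For any $\tau \geq \max\{T_1, T^\star\}$, padding $A^\star$ with idle time in its final state yields a valid $\tau$-schedule with $R(A^\star) = \I$, so \eqref{eq:def_val} gives $\val_\tau(A^\star) = \g{A^\star}{\I} = C^\star$. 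By \autoref{def:s_tau} this forces $\val_\tau(S_\tau) \leq C^\star$. If $S_\tau$ failed to serve some $r \in \I$ with $w(r) > 0$, then \eqref{eq:def_val} would force $\val_\tau(S_\tau) \geq \tau \cdot w(r) \geq \tau \cdot w_{\min}$; hence choosing $\goodphases_\I > \max\{T_1,\,T^\star,\,C^\star/w_{\min}\}$ guarantees that $R(S_\tau)$ contains every positive-weight job of $\I$ whenever $\tau \geq \goodphases_\I$.

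To also capture jobs of weight $0$, I would invoke the deterministic tie-breaking clause in \autoref{def:s_tau}: among $\val_\tau$-minimizers, pick one whose served set $R(\cdot)$ is inclusion-maximal. Because the tie class at value $C^\star$ contains $A^\star$, which serves all of $\I$, this rule yields $R(S_\tau) = \I$ without altering $\val_\tau(S_\tau)$. Substituting $R(S_\tau) = \I$ into \eqref{eq:def_val} turns it into $\val_\tau(S_\tau) = \g{S_\tau}{\I}$, and combining this with $\val_\tau(S_\tau) \leq C^\star$ gives $\g{S_\tau}{\I} \leq \cost_{\OPT}(\I)$, so $S_\tau$ is itself a cost-optimal schedule for~$\I$.

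The only delicate step is the treatment of zero-weight jobs, since they contribute nothing to $\val_\tau$ and are not forced into $R(S_\tau)$ by the penalty argument; the tie-breaking clause is tailored precisely to that case. If that clause is considered too loose, the cleanest workaround is to restrict to instances with strictly positive weights, a standard simplification that leaves the competitive analysis untouched.
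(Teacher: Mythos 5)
Your proposal matches the paper's proof essentially step for step: both pad a cost-optimal offline schedule to length $\tau$, use optimality of $S_\tau$ for $\val_\tau$ to deduce $\val_\tau(S_\tau) \leq \cost_\OPT(\I)$, invoke the penalty term $\tau \cdot w(\cdot)$ with a threshold on $\tau$ to rule out any unserved job, and then substitute $R(S_\tau)=\I$ back into $\val_\tau$ to conclude $S_\tau$ is cost-optimal. The only divergence is your explicit handling of zero-weight jobs; the paper divides by the minimum job weight, so it tacitly assumes all weights are strictly positive, and your WLOG remark (rather than the tie-breaking workaround, which would require changing \autoref{def:s_tau} from ``arbitrary'' to a specific rule) is the cleaner way to align with the paper's intent.
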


\begin{proof}
Let \OPT be a cost-optimal schedule for $\I$ and let $t$ be its length. Let $w$
be the weight of the lightest job from $\I$. We fix $\goodphases_\I = \max \{ t,
(\val_t(\OPT)+1) / w \}$. Now, we pick any $\tau \geq \goodphases_\I$, and
investigate properties of $S_\tau$.

As $\tau \geq \goodphases_\I \geq t$, the schedule of \OPT can be trivially
extended to a $\tau$-schedule $A$ that does nothing in its suffix of length
$\tau - t$. Both $A$ and \OPT complete all jobs, and thus $\val_\tau(A) =
\val_t(\OPT)$. Moreover, as $S_\tau$ minimizes function $\val_\tau$,
$\val_\tau(S_\tau) \leq \val_\tau(A) = \val_t(\OPT) < \goodphases_\I \cdot w
\leq \tau \cdot w$, and thus $S_\tau$ completes all jobs (as otherwise
$\val_\tau$ would include a penalty of at least $\tau \cdot w$). As $S_\tau$ and
$\OPT$ complete all jobs, $\cost_{S_\tau}(\I) = \val_\tau(S_\tau) \leq
\val_t(\OPT) = \cost_\OPT(\I)$, i.e., $S_\tau$ is an optimal solution for $\I$.
\end{proof}

\subparagraph{Sub-phases.}

Recall that the algorithm $\Disc(\gamma,M,\beta)$ chooses a random integer $m
\in [M]$, and executes $\Mimic(\gamma, \omega=-1+m/M+\beta)$. To compare $\Disc$
executions for different random choices, we introduce sub-phases. 
Recall that $\alpha = 2+\gamma$; let $\alpham = \alpha^{1/M}$. 

Recall that the $k$-th phase of \Mimic starts 
at time $\tau_{k-1}$ and ends at time~$\tau_k$, where 
$\tau_k = \min(\I) \cdot \alpha^{k-1+m/M + \beta} = 
\min(\I) \cdot \alpha^{\beta-1} \cdot \alpham^{m + k \cdot M}$.
For any $q$, we define
\begin{equation}
\label{eq:eta_def}
	\eta_q = \eta(q) = \min(\I) \cdot \alpha^{\beta-1} \cdot \alpham^{q}.
\end{equation}
In these terms, $\tau_k = \eta_{m + k \cdot M}$. We define the $q$-th sub-phase
(for $q \geq 0$) as the time interval starting at time $\eta_{q-1}$ and ending
at time $\eta_{q}$. Then, phase $k$ of $\Disc(\gamma,M,\beta)$ consists of
exactly $M$ sub-phases, numbered from $(k-1) \cdot M + m + 1$ to $k \cdot M +
m$. An example of phases and sub-phases is given in \autoref{fig:phases}. We
emphasize that the start and the end of a~sub-phase is a deterministic function 
of the parameters of $\Disc$, while the start and end of a phase depend
additionally on the value $m \in [M]$ that \Disc chooses randomly.

Recall that our deterministic algorithm is equivalent to $\Mimic(\gamma,0)
\equiv \Disc(\gamma, 1, 1)$. In this case $m = 0$, and thus $\eta_q =
\tau_q$ for any $q$, i.e., each phase consists of one sub-phase, and their
indexes coincide. 

\subparagraph{Sub-phases vs Auxiliary Schedules.}

We now identify the times when auxiliary schedules are computed by
$\Disc(\gamma,M,\beta)$. Recall that at the beginning of any phase 
$k+1$ (where $k \geq 1$), i.e., at time $\tau_k = \eta_{m+k\cdot M}$,
$\Disc$ computes and executes schedule $S_{\eta(m+k\cdot M)}$.
Let $\goodphases_\I$ be the threshold guaranteed by \autoref{lem:goodphases}
and we define $K_\I$~as the smallest integer satisfying $\eta({K_\I \cdot M}) \geq \goodphases_\I$.
Note that $K_\I$ is a deterministic function of input $\I$.

For any choice of $m \in [M]$, the schedule $S_{\eta(m+K_\I \cdot M)}$ completes
all jobs. This schedule is executed by $\Disc$ in phase $K_\I+1$, and thus
$\Disc$ terminates latest in phase $K_\I+1$. Summing up, $\Disc(\gamma,M,\beta)$
executes schedules $S_{\eta(m+M)}, S_{\eta(m+2M)}, \dots, S_{\eta(m+K_\I\cdot M)}$.
At the beginning of the first phase, $\Disc$ does nothing, but for notational
ease, we assume that in the first phase, it also computes and executes a dummy
schedule $S_{\eta(m)}$, which does not complete any job. For succinctness, we
use $A_q = S_{\eta(q)}$. In these terms, $\Disc(\gamma,M,\beta)$ executes
schedules $A_{m+k\cdot M}$ for $k \in [K_\I+1]$. 

Let $Q = K_\I \cdot M + (M-1)$: possible schedule indexes used
by \Disc range from $0$ to $Q$.
For any schedule $A_q$, we define the set of indexes of \emph{preceding schedules} 
$\previous{q} = \{q', q'+M, \ldots, q-M\}$, where $q' = q \mod M$.


\begin{figure}
\centering
\includegraphics[width=0.9\textwidth]{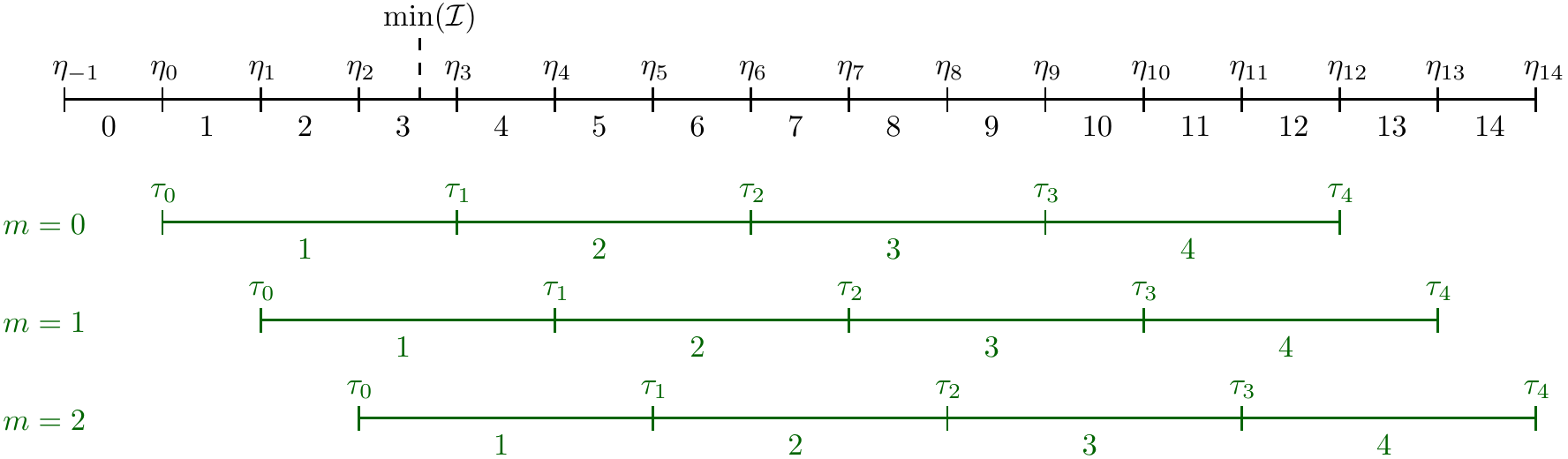}
\caption{Example of phases (green) and sub-phases (black) of algorithm
$\Disc(\gamma,M=3,\beta)$ for all possible choices of $m$. The time interval
lengths are in logarithmic scale. The starts and ends of sub-phases are
deterministic functions of $\gamma$, $M$, and $\beta$, but the start of a~phase
depends additionally on the integer $m \in [M]$ chosen randomly by $\Disc$.
Sub-phase $0$ is not contained in any phase, but will be used in our analysis.}
\label{fig:phases}
\end{figure}


\subparagraph{Fresh and Stale Requests.}

We assume that no jobs are completed by the online algorithm while it is
resetting the executor, and we assume that the execution of schedule $A_q$
may complete only jobs from set~$R(A_q)$.
It is however important to note that $R(A_q)$ and $R(A_{q-M})$ may overlap significantly,
in which case the execution of schedule $A_q$ serves only these jobs from $R(A_q)$ 
that have not been served already. 
To further quantify this effect, for $q \in [Q+1]$,
we define the set of \emph{fresh} jobs of schedule~$A_q$ as 
\begin{equation}
\textstyle	\fresh(A_q) = R(A_q) \setminus \bigcup_{\ell \in \previous{q}} R(A_\ell).
\end{equation}
The remaining jobs from $R(A_q)$ are called \emph{stale} and are denoted
$\stale(A_q) = R(A_q) \setminus \fresh(A_q)$. 
For succinctness, we define the
following shorthand notations for their weights:
\begin{align}
	\wf_q = w(\fresh(A_q)), && \ws_q = w(\stale(A_q)), && 
	w_q = w(R(A_q)) = \wf_q + \ws_q.
\end{align}

\begin{lemma}
\label{lem:weights-relation}
For any $q \in [Q+1]$, it holds that
$\ws_q \leq \sum_{\ell \in \previous{q}} \wf_\ell$. This relation
becomes equality for $q \geq K_\I \cdot M$.
\end{lemma}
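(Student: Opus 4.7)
The plan is to analyze the set $\bigcup_{\ell \in \previous{q}} R(A_\ell)$ and compare it with both $R(A_q)$ and the disjoint union of fresh sets of the preceding schedules. First, I would rewrite the stale part directly from its definition as $\stale(A_q) = R(A_q) \cap \bigcup_{\ell \in \previous{q}} R(A_\ell)$, so that $\ws_q$ is the weight of this intersection.

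The key structural observation is that the sets $\{\fresh(A_\ell)\}_{\ell \in \previous{q}}$ form a partition of $\bigcup_{\ell \in \previous{q}} R(A_\ell)$. Disjointness follows because for two indices $\ell_1 < \ell_2$ in $\previous{q}$, the index $\ell_1$ belongs to $\previous{\ell_2}$ (they share the same residue class mod $M$), so by definition $\fresh(A_{\ell_2})$ excludes every element of $R(A_{\ell_1})$, hence also every element of $\fresh(A_{\ell_1})$. For the covering property, pick any job $r \in \bigcup_{\ell \in \previous{q}} R(A_\ell)$ and let $\ell^\star$ be the smallest index in $\previous{q}$ with $r \in R(A_{\ell^\star})$; then $r$ is not in $R(A_{\ell'})$ for any $\ell' \in \previous{\ell^\star}$, so $r \in \fresh(A_{\ell^\star})$. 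Combining this partition with the monotonicity of $w$ yields
\begin{equation*}
\sum_{\ell \in \previous{q}} \wf_\ell \;=\; w\!\left(\bigcup_{\ell \in \previous{q}} R(A_\ell)\right) \;\geq\; w\!\left(R(A_q) \cap \bigcup_{\ell \in \previous{q}} R(A_\ell)\right) \;=\; \ws_q,
\end{equation*}
which proves the inequality.

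For the equality claim when $q \geq K_\I \cdot M$, I would invoke the definition of $K_\I$ together with \autoref{lem:goodphases}: since $\eta(q) \geq \eta(K_\I \cdot M) \geq \goodphases_\I$ and $\eta$ is increasing in $q$ (because $\alpham > 1$), the schedule $A_q = S_{\eta(q)}$ completes all jobs, i.e.\ $R(A_q) = \I$. Then $\bigcup_{\ell \in \previous{q}} R(A_\ell) \subseteq \I = R(A_q)$, so the intersection in the displayed inequality above equals the full union, turning the inequality into an equality. The main care needed is the indexing bookkeeping of $\previous{q}$ (the residue class mod $M$) to justify disjointness cleanly; once that is phrased correctly, the rest is bookkeeping, and there are no substantive obstacles.
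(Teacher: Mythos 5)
Your proposal is correct and follows essentially the same route as the paper's proof: the paper also reduces the claim to the fact that $\{\fresh(A_\ell)\}_{\ell \in \previous{q}}$ partitions $\bigcup_{\ell \in \previous{q}} R(A_\ell)$ (the paper states this "by a simple induction" while you spell out the disjointness and covering directly), then compares $\stale(A_q)$ to this union, and for the equality case uses that $A_q$ serves every job once $q \geq K_\I \cdot M$. Your identity $\stale(A_q) = R(A_q) \cap \bigcup_{\ell \in \previous{q}} R(A_\ell)$ is a clean way to package the containment the paper argues element-wise, and your indexing observation that $\ell_1 < \ell_2$ in $\previous{q}$ forces $\ell_1 \in \previous{\ell_2}$ is exactly the bookkeeping that makes the induction/partition work.
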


\begin{proof}
By a simple induction, it can be shown that $\biguplus_{\ell \in \previous{q}}
\fresh(A_\ell) = \bigcup_{\ell \in \previous{q}} R(A_\ell)$ for any $q \in [Q+1]$.
Then, using the definition of stale jobs, $\stale(A_q) \subseteq
\bigcup_{\ell \in \previous{q}} R(A_\ell) = \biguplus_{\ell \in \previous{q}} \fresh(A_\ell)$.
Applying weight to both sides yields
$\ws_q \leq \sum_{\ell \in \previous{q}} \wf_\ell$.

Next, we show that this relation can be reversed for $q \geq K_\I \cdot M$ (i.e., for the schedule executed 
in the last phase of $\Disc$). For such $q$, $A_q$ completes all jobs, and thus 
$\bigcup_{\ell \in \previous{q}} R(A_\ell)
\subseteq R(A_q) = \fresh(A_q) \uplus \stale(A_q)$. By the definition of fresh
jobs, $\fresh(A_q)$ does not contain any job from
$\bigcup_{\ell \in \previous{q}} R(A_\ell)$, and thus $\bigcup_{\ell \in \previous{q}} 
R(A_\ell)
\subseteq \stale(A_q)$. This implies that $\biguplus_{\ell \in \previous{q}}
\fresh(A_\ell) = \bigcup_{\ell \in \previous{q}} R(A_\ell) \subseteq \stale(A_q)$.
After applying weights to both sides, we obtain
$\ws_q \geq \sum_{\ell \in \previous{q}} \wf_\ell$ as desired.
\end{proof}

\subparagraph{Jobs Completed in Sub-phases.}

For further analysis, we refine our notions when a job is completed. For a
$\eta_q$-schedule~$A_q$, let $R_j(A_q)$ be the set of jobs completed in
sub-phase $j \leq q$, i.e., within interval $[\eta_{j-1},\eta_j)$.
As $\eta_{-1} \leq \eta_{m-1} \leq \min(\I)$ (cf.~\eqref{eq:eta_def}), 
no job can be completed within the interval $[0,\eta_{-1})$ (before sub-phase $0$).
Hence, $R(A_q) = \biguplus_{j=0}^q R_j(A_q)$.

We partition sets $\fresh(A_q)$ and $\stale(A_q)$ analogously, defining sets $\fresh_j(A_q)$ 
and $\stale_j(A_q)$ (for $0 \leq j \leq q$), such that 
$\fresh(A_q) = \biguplus_{j = 0}^q \fresh_j(A_q)$ and
$\stale(A_q) = \biguplus_{j = 0}^q \stale_j(A_q)$.
	For succinctness, 
for $0 \leq j \leq q$, we introduce the following shorthand notations: 
\begin{itemize}
\item $\wf_{qj} = w(\fresh_j(A_q))$, $\ws_{qj} = w(\stale_j(A_q))$, 
	and $w_{qj} = w(R_j(A_q)) = \wf_{qj} + \ws_{qj}$;
\item $\gf_{qj} = \g{A_q}{\fresh_j(A_q)}$, $\gs_{q j} = \g{A_q}{\stale_j(A_q)}$, 
	and $g_{q j} = \g{A_q}{R_j(A_q)} = \gf_{q j} + \gs_{q j}$.
\end{itemize}

\begin{lemma}
\label{lem:two_schedules}
For any $0 \leq q < \ell \leq Q$, it holds that
$\sum_{j=0}^{q} (g_{q j} - g_{\ell j}) + \sum_{j=0}^{q} \eta_q \cdot 
(w_{\ell j} - w_{q j}) \leq 0$.
\end{lemma}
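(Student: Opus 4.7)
The inequality rearranges to
\[
\sum_{j=0}^{q} g_{qj} - \eta_q \sum_{j=0}^{q} w_{qj} \;\leq\; \sum_{j=0}^{q} g_{\ell j} - \eta_q \sum_{j=0}^{q} w_{\ell j},
\]
so my plan is to exhibit an $\eta_q$-schedule whose $\val_{\eta_q}$-value equals the right-hand side plus a term $\eta_q \cdot w(\I_{\eta_q})$ that also appears in $\val_{\eta_q}(A_q)$, and then invoke the minimality of $A_q$ from \autoref{def:s_tau}.

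Concretely, since $q < \ell$ we have $\eta_q < \eta_\ell$, so the schedule $A_\ell$ runs during the interval $[0,\eta_q)$ and completes exactly the job set $\biguplus_{j=0}^{q} R_j(A_\ell)$ within that interval, with completion times identical to those in $A_\ell$. I will let $A'$ denote the $\eta_q$-schedule obtained by performing the first $\eta_q$ time units of $A_\ell$ and then discarding anything left partially executed (this is a feasible $\eta_q$-schedule in every $\gamma$-resettable problem considered, since truncating a prefix preserves feasibility; any partially-handled job is simply counted as not completed). By construction, $R(A') = \biguplus_{j=0}^{q} R_j(A_\ell)$ and every job in $R(A')$ has the same completion time under $A'$ as under $A_\ell$, so
\[
\g{A'}{R(A')} \;=\; \sum_{j=0}^{q} g_{\ell j}
\qquad\text{and}\qquad
w(R(A')) \;=\; \sum_{j=0}^{q} w_{\ell j}.
\]
Moreover, $R(A') \subseteq \I_{\eta_q}$ because each of its jobs is completed before time $\eta_q$ and hence must have arrived by then. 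Consequently,
\[
\val_{\eta_q}(A') \;=\; \sum_{j=0}^{q} g_{\ell j} \;+\; \eta_q \cdot w(\I_{\eta_q}) \;-\; \eta_q \cdot \sum_{j=0}^{q} w_{\ell j}.
\]
The analogous identity for $A_q$ itself, using $R(A_q) = \biguplus_{j=0}^{q} R_j(A_q) \subseteq \I_{\eta_q}$, gives
\[
\val_{\eta_q}(A_q) \;=\; \sum_{j=0}^{q} g_{qj} \;+\; \eta_q \cdot w(\I_{\eta_q}) \;-\; \eta_q \cdot \sum_{j=0}^{q} w_{qj}.
\]

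To finish, I will apply the optimality of $A_q = S_{\eta_q}$ from \autoref{def:s_tau}, which yields $\val_{\eta_q}(A_q) \leq \val_{\eta_q}(A')$; subtracting the common term $\eta_q \cdot w(\I_{\eta_q})$ and rearranging gives exactly the claimed inequality. The main subtlety I anticipate is justifying that the truncation of $A_\ell$ to $[0,\eta_q)$ really is a valid $\eta_q$-schedule in each concrete problem (in DARP, this amounts to observing that a prefix of the server's path of length $\eta_q$, with any half-transported objects ignored, is again a legal $\eta_q$-schedule that completes exactly the jobs delivered before time $\eta_q$); this is straightforward from the framework of \autoref{sec:resettable}, and once granted the rest of the argument is a one-line application of minimality.
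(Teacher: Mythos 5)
Your proof is correct and is essentially identical to the paper's: the paper likewise defines $A^q_\ell$ as the $\eta_q$-schedule consisting of the first $q$ sub-phases of $A_\ell$, expands $\val_{\eta_q}$ for both $A_q$ and $A^q_\ell$ using $\sum_j \cost_B(R_j(B)) + \eta_q\, w(\I_{\eta_q}) - \eta_q \sum_j w(R_j(B))$, and invokes the minimality of $A_q = S_{\eta_q}$. The only difference is that you spell out the feasibility of the truncation and the inclusion $R(A') \subseteq \I_{\eta_q}$, which the paper leaves implicit.
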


\begin{proof}
For any $\eta_q$-schedule $B$, it holds that
\begin{align*}
\val_{\eta(q)}(B) 
	& = \cost_B(R(B)) + \eta_q \cdot w \left( \I_{\eta(q)} \setminus R(B) \right) \\
	& \textstyle = \sum_{j=0}^q \cost_B(R_j(B)) + \eta_q \cdot w(\I_{\eta(q)}) -
		\eta_q \cdot \sum_{j=0}^q w(R_j(B)).
\end{align*}
Fix any $\ell \leq Q$ and let $A^q_{\ell}$ be the $\eta_q$-schedule consisting
of the first $q$ sub-phases of $\eta_\ell$-schedule~$A_{\ell}$. Since $A_q$ is
a minimizer of $\val_{\eta(q)}$, it holds that $\val_{\eta(q)}(A_q) \leq
\val_{\eta(q)}(A^q_{\ell})$. Thus, $\sum_{j=0}^q g_{q j} -  
		\eta_q \cdot \sum_{j=0}^q w_{q j} 
	\leq 
		\sum_{j=0}^q g_{\ell j} -  
		\eta_q \cdot \sum_{j=0}^q w_{\ell j}.
$
\end{proof}

\subparagraph{Costs of DISC and OPT.}

Finally, we can express costs of $\Disc$ and $\OPT$ using the newly introduced notions.

\begin{lemma}
\label{lem:disc-cost}
For any input $\I$, parameters $M$ and $\beta \in (0,1/M]$, it holds that 
$\E[\cost_\Disc(\I)] $ $= (1/M) \cdot \sum_{q=0}^{Q} \sum_{j=0}^{q} \left( \eta_q \cdot \wf_{qj} + \gf_{q j} \right)$.
\end{lemma}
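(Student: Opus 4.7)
The plan is a direct bookkeeping argument. I would fix the random choice $m \in [M]$, carefully tally the cost of $\Disc(\gamma,M,\beta)$ phase by phase while counting only jobs \emph{newly} completed in each phase, and then average over $m$. A reindexing $q = m + kM$ then collapses the double sum over $(m,k)$ into a single sum over $q$, yielding exactly the form in the statement.

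Concretely, fix $m \in [M]$. In phase $k+1$ (for $k \in \{0,1,\ldots,K_\I\}$), $\Disc$ is in its initial state at time $\tau_k = \eta_{m+kM}$ and executes the $\eta_{m+kM}$-schedule $A_{m+kM}$ over the interval $[\eta_{m+kM},\,2\eta_{m+kM})$. By the delayed-execution property from \autoref{sec:resettable}, a job $r \in R(A_{m+kM})$ is completed at time $\eta_{m+kM} + \ct{A_{m+kM}}{r}$ unless it had already been served in an earlier phase of this execution. The jobs served in earlier phases are exactly $\bigcup_{\ell \in \previous{m+kM}} R(A_\ell)$, so the newly completed jobs in phase $k+1$ are precisely $\fresh(A_{m+kM})$. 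Partitioning $\fresh(A_q) = \biguplus_{j=0}^{q} \fresh_j(A_q)$ by sub-phase and using $\ct{A_q}{r} \in [\eta_{j-1}, \eta_j)$ for $r \in \fresh_j(A_q)$, the cost of $\Disc(\gamma,M,\beta)$ conditional on $m$ becomes
\[
\sum_{k=0}^{K_\I} \sum_{j=0}^{m+kM} \left( \eta_{m+kM} \cdot \wf_{(m+kM)\,j} + \gf_{(m+kM)\,j} \right).
\]
The dummy first-phase schedule $A_m$ contributes nothing since by convention it serves no job, i.e.,~$\wf_{mj} = \gf_{mj} = 0$ for all $j$, so the $k=0$ term is harmless.

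Finally, I would take the expectation over uniform $m \in [M]$ and reindex via $q = m + kM$. As $(m,k)$ ranges over $[M] \times \{0,1,\ldots,K_\I\}$, and since $Q = K_\I M + (M-1)$, the map $(m,k) \mapsto m + kM$ is a bijection onto $\{0,1,\ldots,Q\}$, which immediately yields the claimed identity. The only subtle point---where an incautious accounting would double-count---is correctly identifying that $\Disc$ newly completes exactly the fresh jobs of each executed $A_{m+kM}$ (so stale jobs, already completed under earlier schedules, are not charged twice) and that the intervening reset periods produce no completions; both facts follow directly from the definitions and from the resetting property of \autoref{sec:resettable}.
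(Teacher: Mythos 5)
Your proof is correct and follows essentially the same route as the paper: identify the newly completed jobs in each executed phase as the fresh set, compute their cost via the delayed-execution property, decompose by sub-phase, and observe that each schedule $A_q$ for $q \in \{0,\ldots,Q\}$ is executed with probability $1/M$. Your explicit reindexing $(m,k) \mapsto m+kM$ and your note about the dummy schedule $A_m$ are just slightly more verbose renderings of the paper's one-line observation about the $1/M$ probability.
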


\begin{proof}
Recall that $\Disc$ chooses random $m \in [M]$ and then at time $\eta_q$ it
executes schedule~$A_q$, for all $q \in \{m, m+M, \dots, m+K_\I \cdot M\}$. When
$\Disc$ executes~$A_q$, it completes jobs from $\fresh(A_q)$. By the delayed
execution property of the resettable scheduling (cf.~\autoref{sec:resettable}),
each job $r \in \fresh(A_q)$ is completed at time $\eta_q + \ct{A_q}{r}$. Thus,
the cost of executing $A_q$ by \Disc is equal to 
\begin{align*}
	\textstyle \sum_{r \in \fresh(A_q)}
	w(r) \cdot (\eta_q  + \ct{A_q}{r}) 
	& = \eta_q \cdot w(\fresh(A_q)) + \g{A_q}{\fresh(A_q)} \\
	& \textstyle = \eta_q \cdot \wf_q + \sum_{j=0}^q \gf_{q j} =
	\sum_{j=0}^q \left( \eta_q \cdot \wf_{qj} + \gf_{q j} \right). 
\end{align*}
For any $q \in
[Q+1]$, the probability that \Disc executes $A_q$ is equal to $1/M$, and thus
the lemma follows.
\end{proof}

\begin{lemma}
\label{lem:opt-cost}
For any input $\I$ 
and any $q \in \{Q-M+1,Q-M+2, \dots, Q\}$, it 
holds that $\cost_\OPT(\I) = \sum_{j=0}^{q} g_{qj}$.
\end{lemma}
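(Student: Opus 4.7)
The plan is to combine \autoref{lem:goodphases} (which says sufficiently long schedules $S_\tau$ are optimal and complete all jobs) with the definition of $K_\I$ and the partitioning of $R(A_q)$ into sub-phase sets.

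First I would unpack the index range. Since $Q = K_\I \cdot M + (M-1)$, any $q \in \{Q-M+1, \ldots, Q\}$ satisfies $q \geq K_\I \cdot M$, hence $\eta_q \geq \eta_{K_\I \cdot M} \geq \goodphases_\I$ by the definition of $K_\I$. Applying \autoref{lem:goodphases} to $A_q = S_{\eta(q)}$ then yields two conclusions: (i) $A_q$ completes every job of $\I$, i.e., $R(A_q) = \I$, and (ii) $A_q$ is a cost-minimal schedule for $\I$, i.e., $\cost_{A_q}(\I) = \cost_\OPT(\I)$.

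Next I would rewrite $\cost_{A_q}(\I)$ using the sub-phase partition. Because $R(A_q) = \I$ and no job can be completed before sub-phase~$0$, we have $\I = \biguplus_{j=0}^{q} R_j(A_q)$ as a disjoint union. Therefore
\begin{equation*}
    \cost_{A_q}(\I) \;=\; \sum_{r \in \I} w(r)\cdot \ct{A_q}{r} \;=\; \sum_{j=0}^{q}\; \sum_{r \in R_j(A_q)} w(r)\cdot \ct{A_q}{r} \;=\; \sum_{j=0}^{q} g_{qj},
\end{equation*}
where the last equality is just the definition of $g_{qj} = \g{A_q}{R_j(A_q)}$. Chaining this with $\cost_{A_q}(\I) = \cost_\OPT(\I)$ gives the claim.

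There is no real obstacle here; the only thing to double-check is that Lemma 3 actually applies at $q = K_\I \cdot M$ (the leftmost index in the range), which follows directly from how $K_\I$ is defined as the smallest integer with $\eta(K_\I \cdot M) \geq \goodphases_\I$. The statement then extends automatically to all larger $q \leq Q$, because $\eta_q$ is increasing in $q$ and Lemma 3 applies uniformly above the threshold $\goodphases_\I$.
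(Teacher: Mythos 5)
Your proof is correct and takes the same route as the paper: both invoke \autoref{lem:goodphases} (through the definition of $K_\I$) to conclude that $A_q$ is optimal and completes all jobs for $q \geq Q-M+1 = K_\I \cdot M$, then decompose the cost along the sub-phase partition $R(A_q) = \biguplus_{j=0}^q R_j(A_q)$. You simply spell out the index arithmetic and the appeal to Lemma~3 that the paper compresses into ``Recall that for such choice of $q$, schedules $A_q$ serve all jobs of $\I$ achieving optimal cost.''
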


\begin{proof}
Recall that for such choice of $q$, schedules $A_q$ 
serve all jobs of $\I$ achieving optimal cost. 
Therefore, $\cost_\OPT(\I) = \g{A_q}{R(A_q)} 
= \sum_{j=0}^{q} \g{A_q}{R_j(A_q)}
= \sum_{j=0}^{q} g_{q j}$.
\end{proof}


\section{Factor-Revealing Linear Program}
\label{sec:lp}

Now we show that the \Disc-to-\OPT cost ratio on an arbitrary input $\I$ 
can be upper-bounded by a value of a linear (maximization) program.

Assume we fixed $\gamma$ and any input $\I$ to the 
$\gamma$-resettable	scheduling problem. We also fix parameters of $\Disc$: an integer 
$M$ and $\beta \in (0,1/M]$. These choices imply the values of~$Q$ and $\eta_q$ for any $q$.
This allows us to define the linear program $\mathcal{P}_{\gamma,\I,M,\beta}$ whose
goal is to maximize
\begin{equation}
\textstyle	\sum_{q=0}^{Q} \sum_{j=0}^{q} \eta_q \cdot \wf_{qj} + \gf_{q j}
\end{equation}
subject to the following constraints:
\begin{align}
	\label{eq:opt-rel}
\textstyle	\sum_{j=0}^{q} g_{q j} \leq 1 
		&&  \text{for all}\; Q-M+1 \leq q \leq Q   \\
	\label{eq:weight-rel-1}
\textstyle	\sum_{j=0}^{q} \ws_{qj} 
		- \sum_{\ell \in \previous{q}} \sum_{j=0}^{\ell} \wf_{\ell j} \leq 0 
		&&  \text{for all}\; 0 \leq q \leq Q-M \\
	\label{eq:weight-rel-2}
\textstyle	\sum_{\ell \in \previous{q}} \sum_{j=0}^{\ell} \wf_{\ell j} -
		\sum_{j=0}^{q} \ws_{qj} \leq 0
		&&  \text{for all}\;  Q-M+1 \leq q \leq Q \\
	\label{eq:weight-rel-mix}
\textstyle	\sum_{j=0}^{q} (g_{q j} - g_{\ell j}) + \sum_{j=0}^{q} \eta_q \cdot (w_{\ell j} - w_{q j}) \leq 0
	  	&&  \text{for all}\; 0 \leq q < \ell \leq Q \\
  \label{eq:weight-rel-last}
	  \eta_{j-1} \cdot \ws_{qj} - \gs_{q j} \leq 0
		  && \text{for all}\; 0 \leq j \leq q \leq Q \\  
  \label{eq:weight-rel-easy-1}
	\gf_{q j} - \eta_j \cdot \wf_{qj} \leq 0
		&& \text{for all}\; 0 \leq j \leq q \leq Q  \\
	\label{eq:weight-rel-easy-2}
	\eta_{j-1} \cdot \wf_{q j} - \gf_{q j} \leq 0 
		&& \text{for all}\; 0 \leq j \leq q \leq Q 
\end{align}
and non-negativity of all variables.
In \eqref{eq:weight-rel-mix}, we treat $w_{qj}$ and $g_{qj}$ not as variables, but 
as shorthand notations for 
$\wf_{qj} + \ws_{qj}$ and $\gf_{qj} + \gs_{qj}$, respectively.

The intuition behind this LP formulation is that instead of creating the whole input $\I$,
the adversary only chooses the values of variables $\wf_{qj}$, $\ws_{qj}$, 
$\gf_{q j}$ and $\gs_{q j}$ that satisfy some subset of inequalities (inequalities that have to be satisfied 
if these variables were created on the basis of actual input $\I$). This intuition is formalized 
below. 

\begin{lemma}
\label{lem:comp-ratio-lp}
Fix any $\gamma$, any input $\I$ for $\gamma$-resettable scheduling,
and parameters of \Disc: integer~$M$ and $\beta \in (0,1/M]$. Then, $\E[\cost_\Disc(\I)] /
\cost_\OPT(\I) \leq P^*_{\gamma,\I,M,\beta} / M$, where
$P^*_{\gamma,\I,M,\beta}$ is the value of the optimal solution to
$\mathcal{P}_{\gamma,\I,M,\beta}$.
\end{lemma}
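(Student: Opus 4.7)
The plan is to exhibit an explicit feasible solution of $\mathcal{P}_{\gamma,\I,M,\beta}$ whose objective value equals $M \cdot \E[\cost_\Disc(\I)] / \cost_\OPT(\I)$; then the claim follows because $P^*_{\gamma,\I,M,\beta}$ is by definition the maximum achievable value of the objective.

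Concretely, I would take the quantities $\wf_{qj}, \ws_{qj}, \gf_{qj}, \gs_{qj}$ that have already been defined from the actual execution of $\Disc(\gamma,M,\beta)$ on $\I$ in the previous section, and rescale all of them by the factor $1/\cost_\OPT(\I)$. These are the candidate values of the LP variables. Non-negativity is immediate from their definition as weights and weighted completion times.

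Next, I would verify each family of constraints in turn, essentially by citing the structural lemmas already proved. Constraint \eqref{eq:opt-rel}, after the normalization by $\cost_\OPT(\I)$, is exactly \autoref{lem:opt-cost}. Constraint \eqref{eq:weight-rel-1} and its reverse direction in the last phase \eqref{eq:weight-rel-2} follow from \autoref{lem:weights-relation} after rewriting $\wf_\ell = \sum_j \wf_{\ell j}$ and $\ws_q = \sum_j \ws_{qj}$. Constraint \eqref{eq:weight-rel-mix} is precisely \autoref{lem:two_schedules}. The remaining three families \eqref{eq:weight-rel-last}, \eqref{eq:weight-rel-easy-1}, \eqref{eq:weight-rel-easy-2} are elementary bounds on completion times: any job $r$ completed within the sub-phase interval $[\eta_{j-1},\eta_j)$ of schedule $A_q$ satisfies $\eta_{j-1} \cdot w(r) \leq w(r) \cdot \ct{A_q}{r} \leq \eta_j \cdot w(r)$, and summing separately over fresh and stale jobs in sub-phase $j$ yields the three inequalities.

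Finally, the objective value of this feasible solution is $\sum_{q=0}^Q \sum_{j=0}^q (\eta_q \cdot \wf_{qj} + \gf_{qj})/\cost_\OPT(\I)$, which by \autoref{lem:disc-cost} equals $M \cdot \E[\cost_\Disc(\I)]/\cost_\OPT(\I)$. Since $P^*_{\gamma,\I,M,\beta}$ dominates the value attained by any feasible solution, we obtain $M \cdot \E[\cost_\Disc(\I)]/\cost_\OPT(\I) \leq P^*_{\gamma,\I,M,\beta}$, which is the claimed inequality. I do not anticipate a real obstacle here; the only mildly delicate point is book-keeping the index ranges when splitting the set $\{0,\dots,Q\}$ into the last phase (indices in $\{Q-M+1,\dots,Q\}$, where \autoref{lem:weights-relation} holds with equality and \autoref{lem:opt-cost} applies) and the preceding schedules, to see that each LP constraint is matched by the correct structural fact.
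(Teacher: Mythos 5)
Your proposal is correct and follows essentially the same route as the paper: set the LP variables from the actual execution of $\Disc$ on $\I$, verify feasibility via \autoref{lem:opt-cost}, \autoref{lem:weights-relation}, \autoref{lem:two_schedules}, and the elementary sub-phase completion-time bounds, and read off the objective value via \autoref{lem:disc-cost}. The only cosmetic difference is that you normalize the variables by $\cost_\OPT(\I)$ to directly produce a feasible point of $\mathcal{P}_{\gamma,\I,M,\beta}$, whereas the paper first passes to an equivalent scale-invariant ratio program $\mathcal{P}'_{\gamma,\I,M,\beta}$ and plugs in the unnormalized values; the two are interchangeable.
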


\begin{proof}
By scaling all variables by the same value, $\mathcal{P}_{\gamma,\I,M,\beta}$ is
equivalent to the (non-linear) optimization program
$\mathcal{P}'_{\gamma,\I,M,\beta}$, whose objective is to maximize
$(\sum_{q=0}^{Q} \sum_{j=0}^{q} \eta_q \cdot \wf_{qj} + \gf_{q j}) / \max_{Q-M+1
\leq q \leq Q} \sum_{j=0}^{q} g_{q j}$, subject to constraints
\eqref{eq:weight-rel-1}--\eqref{eq:weight-rel-easy-2}.
In particular, the optimal values of these programs, $P^*_{\gamma,\I,M,\beta}$ and 
$P'^*_{\gamma,\I,M,\beta}$ are equal.

Next, we set the values of variables $\wf_{qj}$, $\ws_{qj}$, $\gf_{qj}$ and
$\gs_{qj}$ on the basis of input $\I$, and parameters $M$ and $\beta$. (Note
that the variables depend on these parameters, but not on the random choices of
$\Disc$.) We now show that they satisfy the constraints of $\mathcal{P}'^*_{\gamma,\I,M,\beta}$
and we relate $\E[\cost_\Disc(\I)] / \cost_\OPT(\I)$ to 
$P^*_{\gamma,\I,M,\beta}$.

By \autoref{lem:weights-relation} and the relations 
$\wf_{q} = \sum_{j=0}^q \wf_{qj}$ and 
$\ws_{q} = \sum_{j=0}^q \ws_{qj}$, 
the variables satisfy \eqref{eq:weight-rel-1} and \eqref{eq:weight-rel-2}. 
Next, \autoref{lem:two_schedules} implies \eqref{eq:weight-rel-mix}.
Inequalities \eqref{eq:weight-rel-last}, \eqref{eq:weight-rel-easy-1} and \eqref{eq:weight-rel-easy-2}
 follow directly by the definition of costs and weights.
Finally, by \autoref{lem:disc-cost} and \autoref{lem:opt-cost}, for any $q
\in \{Q-M+1, \dots, Q\}$, it holds that $\E[\cost_\Disc(\I)] / \cost_\OPT(\I) =
(1/M) \cdot (\sum_{q=0}^{Q} \sum_{j=0}^{q} \eta_q \cdot \wf_{qj} + \gf_{q j})/
(\sum_{j=0}^{q} g_{q j})$, and thus $\E[\cost_\Disc(\I)] / \cost_\OPT(\I)
\leq P'^*_{\gamma,\I,M,\beta} / M = P^*_{\gamma,\I,M,\beta}/M$.
\end{proof}


\subsection{Dual Program and Competitive Ratio.}

By \autoref{lem:comp-ratio-lp}, the optimal value of
$\mathcal{P}_{\gamma,\I,M,\beta}$ is an~upper bound on the competitive ratio of
\Disc. By weak duality, an upper-bound is given by any feasible
solution to the dual program $\mathcal{D}_{\gamma,\I,M,\beta}$ that we present
below.

$\mathcal{D}_{\gamma,\I,M,\beta}$ uses variables $\Dxi{q}, \DC{q}, \DE{q},
\DD{\ell}{q}, \DB{q}{j}, \DG{q}{j}$, and $\DHH{q}{j}$, corresponding to
inequalities \eqref{eq:opt-rel}--\eqref{eq:weight-rel-easy-2} from $\mathcal{P}_{\gamma,\I,M,\beta}$, 
respectively. In the formulas below, we use $L_q = M \cdot K + (q \mod M)$ and 
$S(q) = \{ q+M, q+2 \cdot M, \dots, L_q-M \}$.
For succinctness of the description, we introduce two 
auxiliary variables for any $0 \leq j \leq q \leq Q$:
\begin{align}
\label{eq:UV_def}
\textstyle \DRP{q}{j} = \sum_{\ell = q+1}^{Q}  \DD{\ell}{q}
			- \sum_{\ell = j}^{q-1}  \DD{q}{\ell}
	&&\text{and}&&
	\textstyle 	\DST{q}{j} = \sum_{\ell = j}^{q-1}  \eta_\ell \cdot \DD{q}{\ell}
			- \sum_{\ell = q+1}^{Q}  \eta_q \cdot \DD{\ell}{q}.
\end{align}
The goal of $\mathcal{D}_{\gamma,\I,M,\beta}$ is to minimize 
\begin{equation}
\textstyle \sum_{q=Q-M+1}^{Q} \xi_q
\end{equation}
subject to the following constraints (in all of them, we omitted the statement
that they hold for all $j \in \{0, \ldots, q\}$):
\begin{align}
	\label{eq:dual-gfqj}
	\DRP{q}{j} + \DG{q}{j} - \DHH{q}{j} 	
		\geq 1
		&&  \text{for all}\; 0 \leq q \leq Q-M \\
	\label{eq:dual-gsqj}
	\DRP{q}{j} - \DB{q}{j}   
		\geq 0 
		&&  \text{for all}\; 0 \leq q \leq Q-M \\
	\label{eq:dual-gfqj2}
	\DRP{q}{j} + \DG{q}{j} - \DHH{q}{j}  + \Dxi{q} 
		\geq 1 
		&&  \text{for all}\; Q-M+1 \leq q \leq Q \\
	\label{eq:dual-gsqj2}
	\DRP{q}{j} - \DB{q}{j}  + \Dxi{q} 	
		\geq 0 
		&&  \text{for all}\; Q-M+1 \leq q \leq Q \\
	\label{eq:dual-wfqj}
\textstyle	\DST{q}{j} 
	+ \eta_{j-1} \cdot \DHH{q}{j} -\eta_j \cdot  \DG{q}{j} 
	+ \DE{L_q} - \sum_{\ell \in S(q)} \DC{\ell} 
		\geq \eta_q 
		&&  \text{for all}\; 0 \leq q \leq Q-M \\
	\label{eq:dual-wsqj}
	\DST{q}{j} + \eta_{j-1} \cdot \DB{q}{j} + \DC{q} 	
		\geq 0
		&&  \text{for all}\; 0 \leq q \leq Q-M \\
	\label{eq:dual-wfqj2}
	\DST{q}{j} -\eta_j \cdot \DG{q}{j} + \eta_{j-1} \cdot \DHH{q}{j}
		\geq \eta_q
		&&  \text{for all}\; Q-M+1 \leq q \leq Q \\
	\label{eq:dual-wsqj2}
	\DST{q}{j} + \eta_{j-1} \cdot \DB{q}{j} - \DE{q}
		\geq 0 
		&&  \text{for all}\; Q-M+1 \leq q \leq Q 
\end{align}
and non-negativity of all variables.

\begin{lemma}
\label{lem:feasibledual3}
For any $\gamma$, any input $\I$ for $\gamma$-resettable scheduling, 
any positive integer~$M$, and any $\beta \in (0,1/M]$, 
there exists a feasible solution to $\mathcal{D}_{\gamma,\I,M,\beta}$ 
of value at most $M + \sum_{j=1}^M (2+\gamma)^{j/M}$.
\end{lemma}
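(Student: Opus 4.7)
The proof is by constructing an explicit feasible solution to $\mathcal{D}_{\gamma,\I,M,\beta}$. Write $\alpha = 2+\gamma$ and $\delta = \alpha^{1/M}$, so $\eta_q/\eta_{q-1} = \delta$ for all $q$, and the target value $M + \sum_{j=1}^M \alpha^{j/M}$ rewrites as $\sum_{j=1}^M(1+\delta^j)$. This decomposition suggests the assignment
\[
    \Dxi{q} \;=\; 1 + \delta^{q-(Q-M)}
    \qquad\text{for } q \in \{Q-M+1, \ldots, Q\},
\]
which makes the objective of $\mathcal{D}_{\gamma,\I,M,\beta}$ come out to exactly the claimed bound. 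The remaining work is to exhibit the other dual variables and verify the eight families of constraints.

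For the remaining variables I look for the sparsest sensible assignment. Inspecting the $\gf$-constraints \eqref{eq:dual-gfqj} and \eqref{eq:dual-gfqj2}, the natural choice is $\DG{q}{j} = 1$ and $\DHH{q}{j} = 0$; note that then \eqref{eq:dual-gfqj2} is automatic because $\Dxi{q} \geq 1$ already, and \eqref{eq:dual-gfqj} reduces to $\DRP{q}{j} \geq 0$. The $\gs$-constraints \eqref{eq:dual-gsqj} and \eqref{eq:dual-gsqj2} then pin down $\DB{q}{j} := \DRP{q}{j}$ for $q \leq Q-M$ and $\DB{q}{j} := \DRP{q}{j} + \Dxi{q}$ for $q > Q-M$. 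The critical choice is $\DD{\ell}{q}$: to make the geometric cascade of $\eta$'s align with the telescoping sums inside $\DRP{q}{j}$ and $\DST{q}{j}$, my guess is that $\DD{\ell}{q}$ should be supported only on pairs with $\ell - q$ a small multiple of $M$ (or $\ell = q+1$), and should scale like $c \cdot \delta^{\ell-q}$ with constants fixed so as to match \eqref{eq:dual-wfqj}. The variables $\DC{\ell}$ (for $\ell \leq Q-M$) and $\DE{q}$ (for $q \geq Q-M+1$) then absorb the residual slack in \eqref{eq:dual-wsqj} and \eqref{eq:dual-wsqj2}, via $\DC{q} := -\DST{q}{0}-\eta_{-1}\cdot \DB{q}{0}$ and an analogous formula for $\DE{q}$, taking the worst $j$ in each case.

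The verification of nonnegativity and of the six $\gf/\gs$-constraints becomes mechanical once these definitions are adopted. The main obstacle is the pair of weight constraints \eqref{eq:dual-wfqj} and \eqref{eq:dual-wfqj2}: with $\DG{q}{j} = 1$ and $\DHH{q}{j} = 0$, \eqref{eq:dual-wfqj} becomes
\[
    \DST{q}{j} + \DE{L_q} - \sum_{\ell \in S(q)} \DC{\ell} \;\geq\; \eta_q + \eta_j ,
\]
and \eqref{eq:dual-wfqj2} becomes $\DST{q}{j} \geq \eta_q + \eta_j$. Expanding $\DST{q}{j}$ via \eqref{eq:UV_def} with the proposed geometric form of $\DD{\ell}{q}$ yields partial geometric sums in $\delta$ which, using $\delta^M = \alpha = 2+\gamma$, should telescope to the right-hand side; for the last $M$ indexes the extra additive $\Dxi{q}$-contribution in $\DB{q}{j}$ is exactly what is needed to close the gap. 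The hard step is thus a bookkeeping argument that the partial geometric sums indeed dominate $\eta_q + \eta_j$ uniformly in $(q,j)$, across both the range $q \leq Q-M$ (where $\DC{\ell}$ terms appear) and the range $q \geq Q-M+1$ (where $\Dxi{q}$ and $\DE{q}$ terms appear). Once this matching is verified, nonnegativity of every dual variable follows from the same telescoping identities, completing the construction.
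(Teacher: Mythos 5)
Your starting point agrees with the paper: the same $\Dxi{q} = 1 + \delta^{q-(Q-M)}$ with $\delta = (2+\gamma)^{1/M}$, giving the objective value $\sum_{j=1}^M (1+\delta^j)$ exactly. From there, however, your construction diverges and in fact fails: the choice $\DG{q}{j} = 1$, $\DHH{q}{j} = 0$ is unsatisfiable. Under that choice, constraint \eqref{eq:dual-wfqj2} reduces (as you note) to $\DST{q}{j} \geq \eta_q + \eta_j$, but taking $j = q$ we get $\DST{q}{q} = -\sum_{\ell = q+1}^{Q} \eta_q \cdot \DD{\ell}{q} \leq 0$ by nonnegativity of the $\DD$'s, whereas the requirement is $\DST{q}{q} \geq 2\eta_q > 0$. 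For $q = Q$ the sum is even empty, so $\DST{Q}{Q} = 0$ and the constraint reads $-\eta_Q \geq \eta_Q$, a plain contradiction. The same problem hits \eqref{eq:dual-wfqj} for $q \leq Q-M$ once $j$ gets close to $q$: the $\DE{L_q} - \sum_{\ell \in S(q)} \DC{\ell}$ term has a fixed value independent of $j$, while the shortfall $\eta_q + \eta_j - \DST{q}{j}$ grows as $j \to q$.

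The paper escapes this precisely by taking $\DG{q}{j}$ to depend nontrivially on $q - j$ (a truncated geometric sum $\Delta_{\,\cdot\,}$ that grows as $j$ decreases and drops to $0$ near the diagonal and on the last $M$ indices), and coupling $\DHH{q}{j} = \DB{q}{j} + \DG{q}{j} - 1$. That coupling is what makes \eqref{eq:dual-gfqj} and \eqref{eq:dual-gfqj2} follow automatically from \eqref{eq:dual-gsqj} and \eqref{eq:dual-gsqj2}, and it injects exactly the $(\eta_j - \eta_{j-1})\cdot\DG{q}{j}$ term that absorbs the diagonal shortfall in the weight constraints. So the structure you want to ``flatten away'' by setting $\DG{q}{j}\equiv 1$, $\DHH{q}{j}\equiv 0$ is load-bearing, not a simplification opportunity. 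Beyond this, the proposal leaves the actual form of $\DD{\ell}{q}$, $\DC{q}$, $\DE{q}$ as guesses and defers all eight families of verifications to ``bookkeeping,'' but given the infeasibility above that bookkeeping cannot be made to work; the approach needs a structurally different $\DG{q}{j}$ (as in the paper) before the rest can even begin.
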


We defer the proof to the next subsection, first arguing how it implies 
the main theorem of the paper (the competitive ratio of \Disc).

\begin{proof}[Proof of \autoref{thm:main}]
Fix any $\gamma$, and consider algorithm $\Disc(\gamma, M, \beta)$ for any
positive integer $M$, and any $\beta \in (0,1/M]$. Fix any input $\I$ to the
$\gamma$-resettable scheduling problem. Let $P^*_{\gamma,\I,M,\beta}$ be the
value of an optimal solution to $\mathcal{P}_{\gamma,\I,M,\beta}$. By weak
duality and \autoref{lem:feasibledual3}, $P_{\gamma,\I,M,\beta} \leq M
+ \sum_{j=1}^M (2+\gamma)^{j/M}$. Hence, by \autoref{lem:comp-ratio-lp},
$\E[\cost_\Disc(\I)] / \cost_\OPT(\I) \leq P^*_{\gamma,\I,M,\beta} / M \leq 1 +
(1/M) \cdot \sum_{j=1}^M (2+\gamma)^{j/M}$, as desired.
\end{proof}


\subsection{Proof of \autoref{lem:feasibledual3}}

Let \[
	\textstyle \Delta_k = \sum_{i=0}^k \alpham^i = \left(\alpham^{k+1} - 1 \right) 
/ \left( \alpham-1 \right). 
\]
In particular $\Delta_{-1} = 0$.
We choose the following values of the dual variables:
\[
	\Dxi{q} = 1 + \alpham^{q-Q+M}
	\quad \text{for}\; Q-M+1 \leq q \leq Q , 
\]
\begin{align*}
	\DB{q}{j} & = 
	\begin{cases}
		\Dxi{q} & \text{for}\; Q - M+1 \leq j \leq q \leq Q , \\
		\delta \cdot \Delta_{M-1}  & \text{for}\; 0 \leq j \leq Q-M \text{ and } q = j , \\ 
	 	1 & \text{for}\; 0 \leq j \leq Q-M \text{ and } q \in \{j+1, \ldots,j+M\}, \\
		0 & \text{otherwise} ,
	\end{cases} \\
	\DG{q}{j} & = 
	\begin{cases} 
		\Delta_{q-Q+M-1} - \Delta_{q-j} & \text{for}\; Q - M+1 \leq j \leq q \leq Q , \\
		\Delta_{q-j-M-1} & \text{for}\; j \leq q - M \\
		0 & \text{otherwise} ,
	\end{cases}
\end{align*}
\begin{align*}
	\DC{q} &= \eta_{q- M -1} \cdot \left(\alpham^{M + 1} + 1 \right) \cdot \left(\alpham^{M} - 1 \right)
		&&& \text{for}\; 0 \leq q \leq Q-M , \\
	 \DE{q} &= \eta_{q-M-1} \cdot \left(\alpham^{M+1} + 1\right)
		&&& \text{for}\; Q-M+1 \leq q \leq Q , \\
	 \DD{q}{j} &= \DB{q,}{j+1} - \DB{q}{j}
		&&& \text{for}\; 0 \leq j < q \leq Q , \\
	 \DHH{q}{j} &= \DB{q}{j} + \DG{q}{j} - 1
		&&& \text{for}\; 0 \leq j \leq q \leq Q .
\end{align*}

The values of $\DB{q}{j}$ and $\DG{q}{j}$ (for $0 \leq j \leq q \leq Q)$ are depicted in 
\autoref{fig:dual-b-g} for an easier reference. We will extensively use the property 
that $\eta_i \cdot \alpham^j = \eta_{i+j}$ for any $i$ and $j$.

\begin{figure}
\centering
\begin{subfigure}{.45\textwidth}
  \centering
  \includegraphics[width=0.85\textwidth]{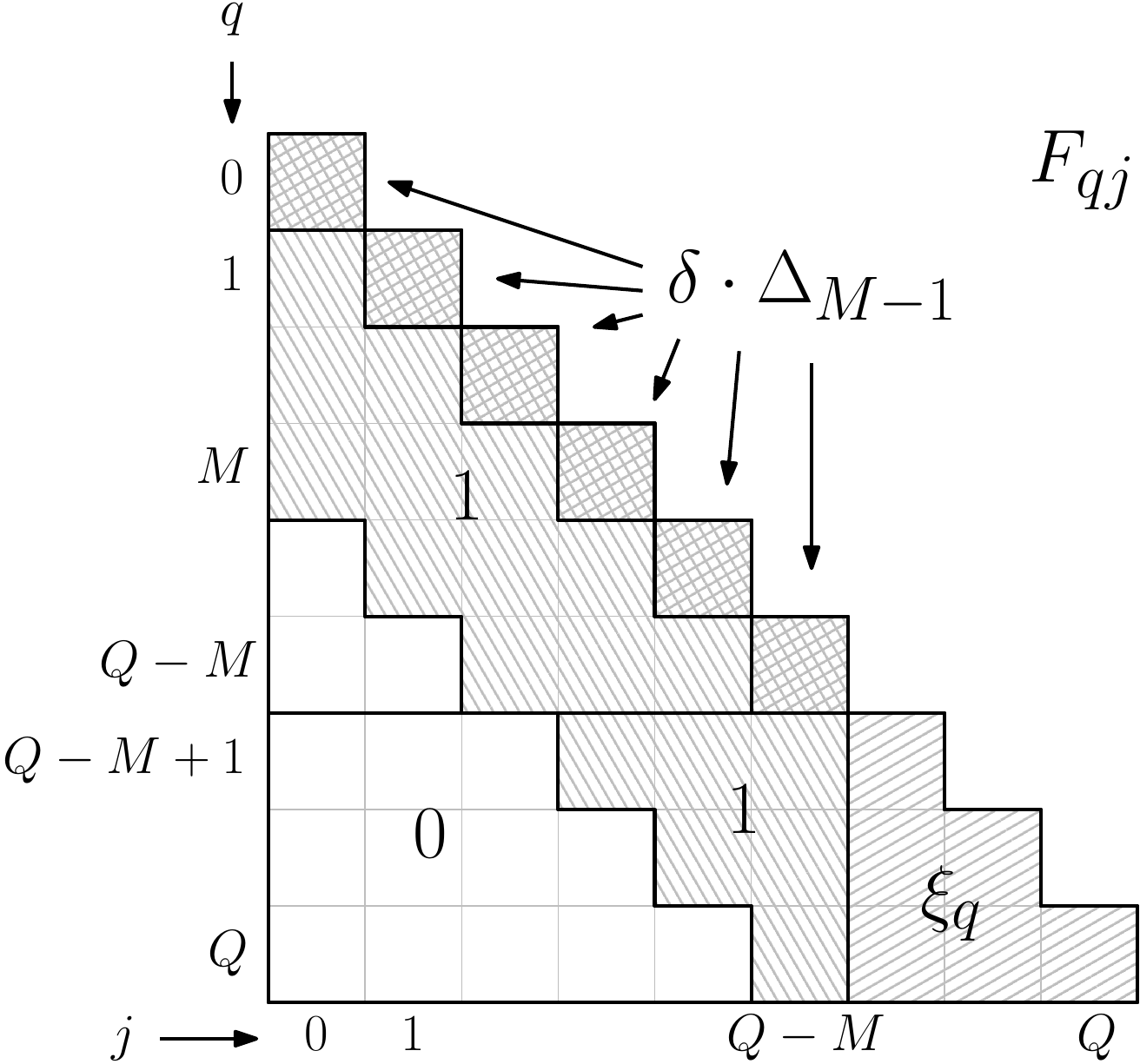}
\end{subfigure}%
\begin{subfigure}{.5\textwidth}
  \centering
  \includegraphics[width=0.85\textwidth]{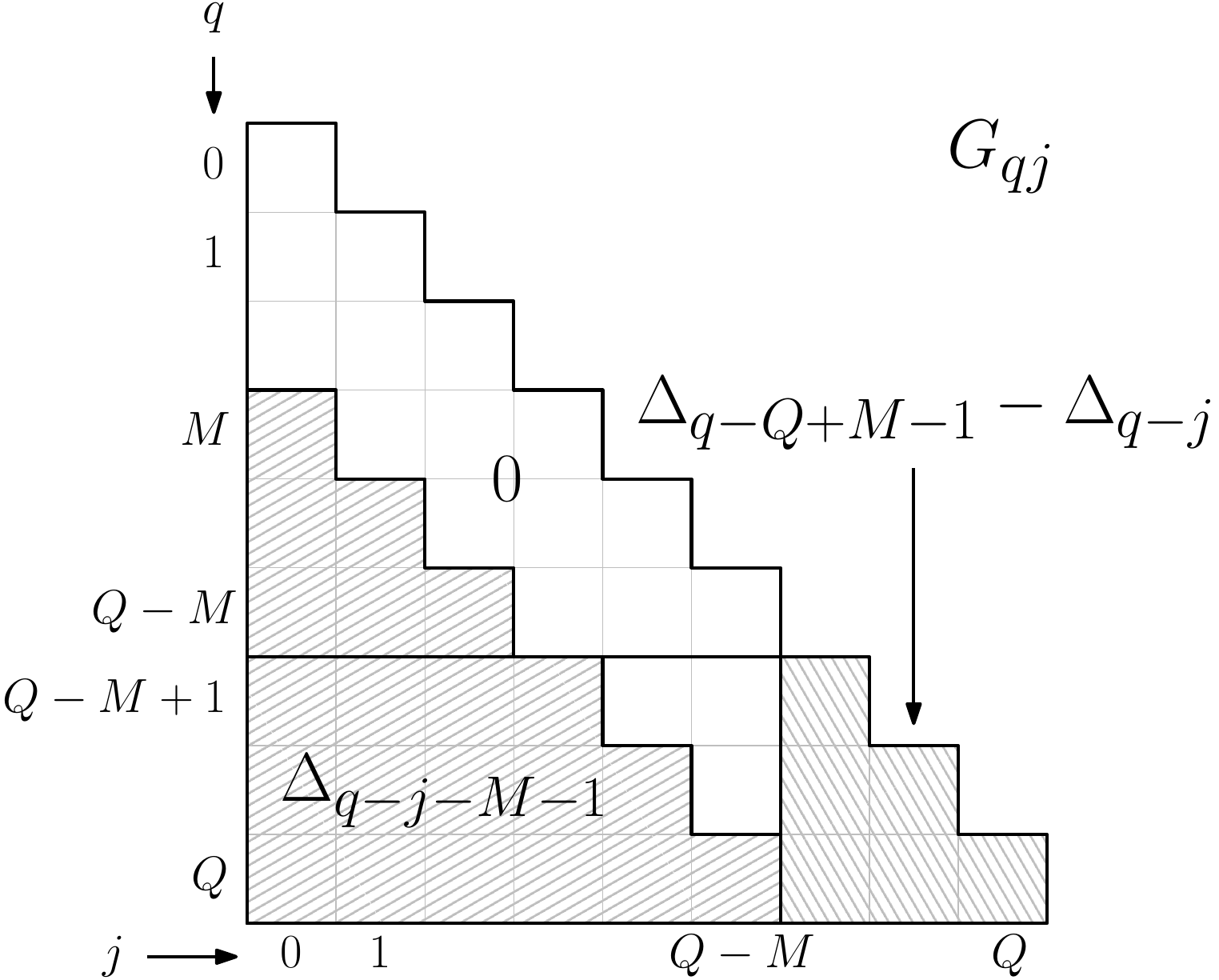}
\end{subfigure}
\caption{Visual presentation of values assigned to dual variables $\DB{q}{j}$ (left)
and $\DG{q}{j}$ (right) for $M = 3$ and $Q = 8$.}
\label{fig:dual-b-g}
\end{figure}

\subparagraph{Objective Value.}

With the above assignment of dual variables the objective value of
$\mathcal{D}_{\gamma,\I,M,\beta}$ is equal to $\sum_{q=Q-M+1}^{Q} \xi_q = M +
\sum_{j=1}^M \alpham^{j} = M + \sum_{j=1}^M (2+\gamma)^{j/M}$ as desired.

\subparagraph{Non-negativity of Variables.}
Variables $\Dxi{q}, \DE{q}, \DC{q}, \DB{q}{j}$ and $\DG{q}{j}$ are trivially
non-negative (for those $q$ and $j$ for which they are defined). The non-negativity
of $\DD{q}{j} = \DB{q,}{j+1} - \DB{q}{j}$ follows as
$\DB{q}{j}$ is a non-decreasing function of its second argument
(cf.~\autoref{fig:dual-b-g}).

Finally, for showing non-negativity of variable $\DHH{q}{j}$, we consider two cases. 
If $j \geq q - M$, then $\DB{q}{j} \geq 1$. Otherwise, $j \leq q -M -1$, and then 
$\DG{q}{j} = \Delta_{q-j-M-1} \geq 1$. Thus, in either case $\DHH{q}{j} 
= \DB{q}{j} + \DG{q}{j} - 1 \geq 0$.

\subparagraph{Helper Bounds.}

It remains to show that the given values of dual variables 
satisfy all constraints \eqref{eq:dual-gfqj}--\eqref{eq:dual-wsqj2}
of the dual program $\mathcal{D}_{\gamma,\I,M,\beta}$.
We define a few helper notions and identities that are used throughout the proof of dual feasibility.
For any $q \in [Q+1]$, let 
\begin{align*}
	\textstyle R_q = \sum_{\ell = q+1}^{Q} \DD{\ell}{q}
	= \sum_{\ell=q+1}^{Q} \left( \DB{\ell,}{q+1} - \DB{\ell}{q} \right) .
\end{align*}

\begin{lemma}
$R_q = \delta \cdot \Delta_{M-1}$ for  $q \leq Q-M$ and $R_q = 0$ otherwise.
\end{lemma}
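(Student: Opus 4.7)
The plan is to carry out a direct case analysis on $q$, split into three regimes based on which of the four piecewise branches of $\DB{\ell}{j}$ (namely $\Dxi{\ell}$, $\delta \cdot \Delta_{M-1}$, $1$, or $0$) governs each term in the sum $R_q = \sum_{\ell=q+1}^{Q} \bigl(\DB{\ell}{q+1} - \DB{\ell}{q}\bigr)$. Two of the regimes will be easy. When $Q-M+1 \le q \le Q$, both $j = q$ and $j = q+1$ lie in the first branch for every $\ell$ in the summation range, so $\DB{\ell}{q} = \DB{\ell}{q+1} = \Dxi{\ell}$, every term cancels, and $R_q = 0$. When $q = Q-M$, the column $j = q+1 = Q-M+1$ still lies in the first branch, while the column $j = Q-M$ lies in the third (since $\ell \in \{j+1, \ldots, j+M\}$ throughout), giving $\DB{\ell}{Q-M+1} - \DB{\ell}{Q-M} = \Dxi{\ell} - 1 = \alpham^{\ell-Q+M}$ for every $\ell \in \{Q-M+1, \ldots, Q\}$. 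Reindexing by $i = \ell - Q + M \in \{1, \ldots, M\}$, the sum evaluates to $\sum_{i=1}^{M} \alpham^{i} = \alpham \cdot \Delta_{M-1} = \delta \cdot \Delta_{M-1}$.

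The remaining regime $q \le Q-M-1$ is where interior cancellation does the work. Both columns now sit entirely in the lower three branches of the piecewise definition. I would first observe that the column $j = q$ equals $1$ on $\ell \in \{q+1, \ldots, q+M\}$ and $0$ beyond, while the column $j = q+1$ equals $\delta \cdot \Delta_{M-1}$ on the diagonal $\ell = q+1$, equals $1$ on $\ell \in \{q+2, \ldots, q+M+1\}$, and $0$ beyond. Consequently the differences in the interior range $\ell \in \{q+2, \ldots, q+M\}$ vanish pairwise; the boundary at $\ell = q+1$ contributes $\delta \cdot \Delta_{M-1} - 1$; the boundary at $\ell = q+M+1$ (which still belongs to the summation range since $q + M + 1 \le Q$) contributes $1 - 0$; the tail $\ell \ge q+M+2$ contributes $0$; and the whole sum collapses to $\delta \cdot \Delta_{M-1}$.

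The only genuine obstacle will be the bookkeeping: for each pair $(\ell, q)$ and $(\ell, q+1)$ in the summation range I need to pin down which of the four piecewise branches of $\DB{\cdot}{\cdot}$ applies, taking care at the transitions $j = Q-M$ and $j = Q-M+1$. Once that is settled, the geometric series in the $q = Q-M$ case and the telescoping in the $q \le Q-M-1$ case are immediate, and the diagram already drawn for $\DB{q}{j}$ in the paper makes the bookkeeping essentially transparent.
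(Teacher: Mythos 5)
Your proof is correct and follows the same three-regime case split on $q$ that the paper uses, computing each case directly from the piecewise definition of $\DB{q}{j}$; the only cosmetic difference is that in the $q \leq Q-M-1$ case the paper packages the computation as a shift-and-telescope identity whereas you do the equivalent term-by-term cancellation. Both yield identical calculations in the other two cases, so this is essentially the paper's proof.
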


\begin{proof}
We consider three cases.
\begin{enumerate}
\item $q \in \{0, \dots, Q-M-1\}$. Then, 
$R_q = \DB{q+1,}{q+1} 
	+ \sum_{\ell=q+1}^{Q} \left( \DB{\ell+2,}{\ell+1} - \DB{\ell+1,}{\ell} \right) 
	- \DB{Q}{q} = \delta \cdot \Delta_{M-1} + \sum_{\ell=q+1}^Q 0 - 0 = \delta \cdot \Delta_{M-1}$.
\item $q = Q-M$. Then,
$R_q = \sum_{\ell=Q-M+1}^Q (\xi_\ell - 1) = \sum_{j=1}^M \alpham^j = \delta \cdot \Delta_{M-1}$.

\item $q \in \{Q-M+1,\dots,Q\}$. Then, 
$R_q = \sum_{\ell=q+1}^{Q} (\xi_\ell - \xi_\ell) = 0$.
\qedhere
\end{enumerate}
\end{proof}

Next, we investigate the values of $\DST{q}{j}$ for different $q$ and $j$. 
Using its definition (cf.~\eqref{eq:UV_def}), 
\begin{equation}
\label{eq:dst}
\textstyle	\DST{q}{j} 
	= \sum_{\ell = j}^{q-1}  \eta_\ell \cdot \DD{q}{\ell}
			- \sum_{\ell = q+1}^{Q}  \eta_q \cdot \DD{\ell}{q}
	= \sum_{\ell = j}^{q-1}  \eta_\ell \cdot 
		\left( \DB{q,}{\ell+1} - \DB{q}{\ell} \right)
			- \eta_q \cdot R_q.
\end{equation}
Additionally, using $\DHH{q}{j} = \DB{q}{j} + \DG{q}{j} - 1$, we obtain
\begin{equation}
\label{eq:GH_relation}
\eta_j \cdot \DG{q}{j} - \eta_{j-1} \cdot \DHH{q}{j} 
	= (\eta_{j} - \eta_{j-1}) \cdot \DG{q}{j} + \eta_{j-1} - \eta_{j-1} \cdot \DB{q}{j} .
\end{equation}
Using the chosen values of $\DG{q}{j}$, we observe that
\begin{equation}
\label{eq:H_and_G_better}
	(\eta_{j} - \eta_{j-1}) \cdot \DG{q}{j} = 
	\begin{cases}
		\eta_{q+j-Q+M-1} - \eta_{q} & \text{for}\; Q-M+1 \leq j \leq q, \\
		\eta_{q-M-1} - \eta_{j-1} & \text{for}\; j \leq q - M -1, \\
		0 & \text{otherwise}.
	\end{cases}
\end{equation}
Furthermore, in all the cases, it can be verified that
\begin{equation}
\label{eq:G_relation}
	(\eta_{j} - \eta_{j-1}) \cdot \DG{q}{j} + \eta_{j-1} - \eta_{q-M-1} \geq 0.
\end{equation}

\subsubsection{Showing inequalities (\ref{eq:dual-gfqj})--(\ref{eq:dual-gsqj2})}

We prove that relations \eqref{eq:dual-gfqj}--\eqref{eq:dual-gsqj2} hold with equality. 
In fact, it suffices to show \eqref{eq:dual-gsqj} and~\eqref{eq:dual-gsqj2}:
inequalities \eqref{eq:dual-gfqj} and \eqref{eq:dual-gfqj2} follow immediately 
as we chose $\DHH{q}{j} = \DB{q}{j} + \DG{q}{j} - 1$.
Using the definition of $\DRP{q}{j}$ (cf.~\eqref{eq:UV_def}), we obtain
\[ 
	\textstyle \DRP{q}{j} = \sum_{\ell = q+1}^{Q} \DD{\ell}{q}
			- \sum_{\ell = j}^{q-1}  \DD{q}{\ell}
			= R_q - \sum_{\ell=j}^{q-1} \left( \DB{q,}{\ell+1} - \DB{q}{\ell} \right) 
			= R_q - \DB{q}{q} + \DB{q}{j} .
\]
Now, we observe that for $q \leq Q-M$, it holds that $R_q - \DB{q}{q} = \delta \cdot \Delta_{M-1} - \delta \cdot \Delta_{M-1} = 0$,
and thus $\DRP{q}{j} - \DB{q}{j} = 0$, which implies \eqref{eq:dual-gsqj}. 
On the other hand, for $q > Q-M$, it holds that $R_q - \DB{q}{q} = 0 - \xi_q$, 
and hence $\DRP{q}{j} - \DB{q}{j} + \xi_q = 0$, which implies \eqref{eq:dual-gsqj2}.

\subsubsection{Showing inequalities (\ref{eq:dual-wfqj})--(\ref{eq:dual-wsqj})}

Within this part, we assume $q \leq Q-M$. 
We start with evaluating some terms that are present in~\eqref{eq:dual-wfqj} and \eqref{eq:dual-wsqj}.
First, we observe that 
\begin{align}
\label{eq:C_q}
	\DC{q} = \eta_{q- M -1} \cdot \left(\alpham^{M + 1} + 1 \right) \cdot \left(\alpham^{M} - 1 \right) 
	 = \eta_{q+M} - \eta_q + \eta_{q-1} - \eta_{q-M-1}.
\end{align}
Second, we compute the term $\DE{L_q} - \sum_{\ell \in S(q)} \DC{\ell}$.
Recall that $S(q) = \{q+M, q+2\cdot M, \dots, L_q - M\}$. Thus,
\begin{align}
\nonumber
\textstyle	\DE{L_q} - \sum_{\ell \in S(q)} \DC{\ell}
	& = \textstyle \left(\alpham^{M+1} + 1 \right) \cdot \left[ 
			\eta(L_q-M-1)  - 
			(\alpham^{M}-1) \cdot \eta(-M-1) \cdot \sum_{\ell \in S(q)} \alpham^\ell 
		\right] \\
\nonumber
	& = (\alpham^{M+1}+1) \cdot \left[ \eta(L_q-M-1) - \eta(-M-1) \cdot \left(\alpham^{L_q} - \alpham^{q+M} \right) \right] \\
	\label{eq:E_Csum}
	& = (\alpham^{M+1}+1) \cdot \eta_{q-1} 
	 = \eta_{q+M} + \eta_{q-1} .
\end{align}

\begin{lemma}
\label{lem:dual-tech}
Fix any $0 \leq j \leq q \leq Q-M$. Then,
\[
	\DST{q}{j} 
	= \eta_q - \eta_{q-1} - \eta_{q+M} 
	- \eta_{j-1} \cdot \DB{q}{j} + (\eta_{j} - \eta_{j-1}) \cdot \DG{q}{j} + \eta_{j-1}.
\]
\end{lemma}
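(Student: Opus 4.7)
My plan is to prove the formula by backward induction on $j$, from $j = q$ down to $j = 0$. First, I substitute the helper lemma's value $R_q = \delta \cdot \Delta_{M-1}$ (valid since $q \leq Q-M$) into the definition in \eqref{eq:dst} to obtain
\[ \DST{q}{j} = \sum_{\ell=j}^{q-1} \eta_\ell \cdot \bigl(\DB{q,}{\ell+1} - \DB{q}{\ell}\bigr) - \eta_q \cdot \delta \cdot \Delta_{M-1}. \]
A one-line calculation using $\delta \cdot \Delta_{M-1} = \sum_{i=1}^M \delta^i$ together with the telescope $\sum_{i=1}^M (\eta_{q+i} - \eta_{q+i-1}) = \eta_{q+M} - \eta_q$ yields the key identity
\[ (\eta_q - \eta_{q-1}) \cdot \delta \cdot \Delta_{M-1} = \eta_{q+M} - \eta_q, \]
which I will reuse throughout.

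For the base case $j = q$ the sum collapses, so $\DST{q}{q} = -\eta_q \cdot \delta \cdot \Delta_{M-1}$. The assignments $\DB{q}{q} = \delta \cdot \Delta_{M-1}$ and $\DG{q}{q} = 0$ reduce the claimed right-hand side to $\eta_q - \eta_{q+M} - \eta_{q-1} \cdot \delta \cdot \Delta_{M-1}$, and equality with $-\eta_q \cdot \delta \cdot \Delta_{M-1}$ is exactly the key identity rearranged.

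For the inductive step, \eqref{eq:dst} gives the one-step recurrence $\DST{q}{j} = \DST{q,}{j+1} + \eta_j \cdot (\DB{q,}{j+1} - \DB{q}{j})$. Applying the inductive hypothesis to $\DST{q,}{j+1}$ and matching against the target expression for $\DST{q}{j}$, using $\eta_j - \eta_{j-1} = \eta_{j-1}(\delta-1)$ and $(\eta_{j+1} - \eta_j)/(\eta_j - \eta_{j-1}) = \delta$, the entire inductive step collapses to verifying
\[ \DB{q}{j} + \DG{q}{j} - \delta \cdot \DG{q,}{j+1} = 1. \]
This single identity I then check from the piecewise assignment in three regimes: (i) $q - M \leq j \leq q - 1$, where $\DB{q}{j} = 1$ and both $\DG$ terms vanish; (ii) the boundary $j = q - M - 1$, where $\DB{q}{j} = 0$, $\DG{q}{j} = \Delta_0 = 1$, and $\DG{q,}{j+1} = \Delta_{-1} = 0$; (iii) $j \leq q - M - 2$, where the identity reduces to $\Delta_{k+1} - \delta \cdot \Delta_k = 1$ for $k = q - j - M - 2$, a direct consequence of $\Delta_{k+1} = 1 + \delta \cdot \Delta_k$.

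The main obstacle is not conceptual but bookkeeping: the values $\DB{q}{j}$ and $\DG{q}{j}$ are defined piecewise across several regimes, and the inductive step must be run carefully across the boundaries $j = q$, $j = q-1$, $j = q-M$, and $j = q-M-1$. The dual variables are engineered precisely so that the collapsing identity $\DB{q}{j} + \DG{q}{j} - \delta \cdot \DG{q,}{j+1} = 1$ holds in every regime; once this observation is unpacked, both the base case and the inductive step follow essentially automatically.
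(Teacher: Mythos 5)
Your proof is correct, and it takes a genuinely different route from the paper's. The paper evaluates the sum in \eqref{eq:dst} directly: it notes that only the jumps of the sequence $\DB{q}{j}, \DB{q,}{j+1}, \dots, \DB{q}{q}$ contribute (from $0$ to $1$ at index $q-M$ and from $1$ to $\delta\cdot\Delta_{M-1}$ at index $q$), and then computes the resulting telescopic expression in three cases depending on where $j$ sits relative to those jumps. You instead run a backward induction on $j$ with base case $j=q$, and you observe that after substituting the inductive hypothesis into the one-step recurrence $\DST{q}{j} = \DST{q,}{j+1} + \eta_j\cdot(\DB{q,}{j+1}-\DB{q}{j})$ the whole step reduces, after dividing by $\eta_j - \eta_{j-1}$, to the single local identity
\[
  \DB{q}{j} + \DG{q}{j} - \delta\cdot\DG{q,}{j+1} = 1,
\]
which you then check across the same regimes that govern the paper's case split. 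The bookkeeping is of comparable length, but your formulation buys something: it makes explicit the local constraint that the values of $\DB{q}{j}$ and $\DG{q}{j}$ were engineered to satisfy, so the role of the two piecewise definitions is transparent rather than emerging only from a global tally of the jumps. I verified the base case, the recurrence, the reduction to the collapsing identity, and the identity itself in each of your three regimes (including the boundary values via $\Delta_{-1}=0$ and $\Delta_{k+1}=1+\delta\cdot\Delta_k$); all are sound. One small gap to fill in if you were to write this out in full: you skipped the algebra showing that the inductive step reduces to the collapsing identity, but the calculation is routine and I confirmed it works using $\eta_{j+1}-\eta_j = \delta\cdot(\eta_j-\eta_{j-1})$.
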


\begin{proof}
By the definition, $\Delta_{M-1} = \sum_{i=0}^{M-1} \alpham^i$, 
and therefore $(\eta_{q-1}-\eta_{q}) \cdot \delta \cdot \Delta_{M-1} = \eta_q - \eta_{q+M}$.
Thus, it suffices to show the following relation
\[
	\DST{q}{j} 
	= \eta_{q-1} \cdot (\delta \cdot \Delta_{M-1} - 1) 
		- \eta_{q} \cdot \delta \cdot \Delta_{M-1} 
		- \eta_{j-1} \cdot \DB{q}{j} + (\eta_{j} - \eta_{j-1}) \cdot \DG{q}{j} + \eta_{j-1}.
\]
To evaluate $\DST{q}{j}$ using \eqref{eq:dst}, it is useful to trace values
$\DB{q}{j}, \DB{q,}{j+1}, \dots, \DB{q}{q}$ (cf.~\autoref{fig:dual-b-g}), noting
that only the increases of these values contribute to $\DST{q}{j}$.
We also note that for $q \leq Q-M$, possible increases are from $0$ to $1$
(between $\DB{q,}{q-M-1}$ and $\DB{q,}{q-M}$) and from $1$ to $\delta \cdot
\Delta_{M-1}$ (between $\DB{q,}{q-1}$ and $\DB{q}{q}$). We consider three cases,
using $R_q = \alpham \cdot \Delta_{M-1}$ below.

\begin{enumerate}
\item $j \leq q-M-1$. 
Then, $\DB{q}{j} = 0$ and 
\begin{align*}
	\DST{q}{j} 
	& = \eta_{q-1} \cdot \left( \DB{q}{q} - \DB{q,}{q-1} \right)
		+ \eta_{q-M-1} \cdot \left( \DB{q,}{q-M} - \DB{q,}{q-M-1} \right)
			- \eta_q \cdot R_q \\
	& = \eta_{q-1} \cdot \left( \delta \cdot \Delta_{M-1} - 1 \right)
		- \eta_q \cdot  \delta \cdot \Delta_{M-1}
		+ \eta_{q-M-1} - \eta_{j-1} + \eta_{j-1} - \eta_{j-1} \cdot \DB{q}{j}. 
\end{align*}
The lemma follows as 
$(\eta_{j} - \eta_{j-1}) \cdot \DG{q}{j} = \eta_{q-M-1} - \eta_{j-1}$ (see~\eqref{eq:H_and_G_better}).

\item $j \in \{ q-M, \ldots, q-1 \}$. 
Then $\DB{q}{j} = 1$, and 
\begin{align*}
	\DST{q}{j} 
	& = \eta_{q-1} \cdot \left( \DB{q}{q} - \DB{q,}{q-1} \right)
			- \eta_q \cdot R_q \\
	& = \eta_{q-1} \cdot \left( \delta \cdot \Delta_{M-1} - 1 \right)
		- \eta_q \cdot \delta \cdot \Delta_{M-1} \\
	& = \eta_{q-1} \cdot \left( \delta \cdot \Delta_{M-1} - 1 \right)
		- \eta_q \cdot \delta \cdot \Delta_{M-1} - \eta_{j-1} \cdot \DB{q}{j} + \eta_{j-1}. 
\end{align*}
The lemma follows as $(\eta_{j} - \eta_{j-1}) \cdot \DG{q}{j} = 0$ (see~\eqref{eq:H_and_G_better}).

\item $j = q$. 
Then $\DB{q}{j} = \delta \cdot \Delta_{M-1}$, and thus
\begin{align*}
	\DST{q}{j} 
	&=  - \eta_q \cdot R_q \\
	&=  \eta_{q-1} \cdot \delta \cdot \Delta_{M-1} 
		- \eta_q \cdot \delta \cdot \Delta_{M-1} 
		- \eta_{j-1} \cdot \DB{q}{j} \\
	&=  \eta_{q-1} \cdot \left( \delta \cdot \Delta_{M-1} - 1 \right)
		- \eta_q \cdot \delta \cdot \Delta_{M-1} 
		- \eta_{j-1} \cdot \DB{q}{j} + \eta_{j-1} .
\end{align*}
As in the previous case, the lemma follows as $(\eta_{j} - \eta_{j-1}) \cdot \DG{q}{j} = 0$.
\qedhere
\end{enumerate}

\end{proof}

\subparagraph{Showing Inequality (\ref{eq:dual-wfqj}).}

We show that \eqref{eq:dual-wfqj} holds with equality.
Using \autoref{lem:dual-tech}, \eqref{eq:E_Csum}, and \eqref{eq:GH_relation} yields
\begin{align*}
	\DST{q}{j} 
		& \textstyle  + \eta_{j-1} \cdot \DHH{q}{j} -\eta_j \cdot  \DG{q}{j} + 
	\DE{L_q} - \sum_{\ell \in S(q)} \DC{\ell} \\
		& = \eta_q - \eta_{q-1} - \eta_{q+M} 
		- \eta_{j-1} \cdot \DB{q}{j} + (\eta_{j} - \eta_{j-1}) \cdot \DG{q}{j} + \eta_{j-1} \\
		& \quad - (\eta_{j} - \eta_{j-1}) \cdot \DG{q}{j} - \eta_{j-1} + \eta_{j-1} \cdot \DB{q}{j} 
		 + \eta_{q+M} + \eta_{q-1} 
		 = \eta_q .
\end{align*}

\subparagraph{Showing Inequality (\ref{eq:dual-wsqj}).}

Using \autoref{lem:dual-tech}, \eqref{eq:E_Csum}, and \eqref{eq:GH_relation} yields
\begin{align*}
	\DST{q}{j} & + \eta_{j-1} \cdot \DB{q}{j} + \DC{q} 	\\
		& = \eta_q - \eta_{q-1} - \eta_{q+M} 
		- \eta_{j-1} \cdot \DB{q}{j} + (\eta_{j} - \eta_{j-1}) \cdot \DG{q}{j} + \eta_{j-1} \\
		& \quad 
		+ \eta_{j-1} \cdot \DB{q}{j} + \eta_{q+M} - \eta_q + \eta_{q-1} - \eta_{q-M-1} \\
		& = (\eta_{j} - \eta_{j-1}) \cdot \DG{q}{j} + \eta_{j-1} - \eta_{q-M-1} 
		\geq 0.
\end{align*}
where the last inequality follows by \eqref{eq:G_relation}.


\subsubsection{Showing inequalities (\ref{eq:dual-wfqj2})--(\ref{eq:dual-wsqj2})}

Within this part, we assume that $q \geq Q-M+1$. 

\begin{lemma}
\label{lem:dual-tech-2}
Fix any $q \geq Q-M+1$ and $0 \leq j \leq q$. Then,
\[
	\DST{q}{j} 
	= \eta_q + (\eta_{j} - \eta_{j-1}) \cdot \DG{q}{j}- \eta_{j-1} \cdot \DB{q}{j} + \eta_{j-1}.
\]
\end{lemma}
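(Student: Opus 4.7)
The plan is to mirror the proof of Lemma~\ref{lem:dual-tech}, exploiting the simplification that $R_q = 0$ when $q \geq Q-M+1$. By~\eqref{eq:dst}, this reduces $\DST{q}{j}$ to $\sum_{\ell=j}^{q-1} \eta_\ell \cdot (\DB{q,}{\ell+1} - \DB{q}{\ell})$; only the ``jumps'' of $\DB{q}{\ell}$ lying in the range $[j,q-1]$ contribute.

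First I would trace $\DB{q}{\ell}$ as $\ell$ ranges over $[0,q]$ for fixed $q \in [Q-M+1,Q]$. From the definition, $\DB{q}{\ell}$ is a non-decreasing step function: it equals $0$ for $\ell \leq q-M-1$, equals $1$ for $q-M \leq \ell \leq Q-M$, and equals $\Dxi{q} = 1 + \alpham^{q-Q+M}$ for $Q-M+1 \leq \ell \leq q$. Hence there are at most two jumps in $[j,q-1]$: one of size $1$ at $\ell = q-M$, and one of size $\alpham^{q-Q+M}$ at $\ell = Q-M+1$.

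Next I would split on $j$ into three cases---$j \leq q-M-1$, $q-M \leq j \leq Q-M$, and $Q-M+1 \leq j \leq q$---and read off $\DST{q}{j}$ from the jumps contained in $[j,q-1]$. This yields $\DST{q}{j} = \eta_{q-M-1} + \eta_{Q-M} \cdot \alpham^{q-Q+M} = \eta_{q-M-1} + \eta_q$ in case~(i), $\DST{q}{j} = \eta_q$ in case~(ii), and $\DST{q}{j} = 0$ in case~(iii). To close each case, I would evaluate the claimed right-hand side using the values $\DB{q}{j} \in \{0,1,\Dxi{q}\}$ and the corresponding forms of $\DG{q}{j}$ (namely $\Delta_{q-j-M-1}$, then $0$, then $\Delta_{q-Q+M-1} - \Delta_{q-j}$), and repeatedly apply the identity $(\eta_j - \eta_{j-1}) \cdot \Delta_k = \eta_{j+k} - \eta_{j-1}$, which follows from $\eta_j - \eta_{j-1} = \eta_{j-1}(\alpham-1)$ and $\Delta_k = (\alpham^{k+1}-1)/(\alpham-1)$.

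The main obstacle is purely bookkeeping: correctly identifying the jump positions of $\DB{q}{\ell}$ and the regime of $\DG{q}{j}$ relevant to each sub-case, so that the resulting arithmetic on $\eta_i$ collapses cleanly (e.g.\ $\eta_{j+q-Q+M-1} - \eta_{j-1} \cdot \alpham^{q-Q+M}$ telescoping to $0$ in case~(iii), and $\eta_{q-M-1} - \eta_{j-1}$ cancelling against $\eta_{j-1}$ in case~(i)). Once the case split is fixed, each case reduces to a one-line calculation and the lemma follows.
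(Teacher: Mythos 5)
Your proposal is correct and matches the paper's proof: both use the simplification $R_q = 0$, trace the two jumps of $\ell \mapsto \DB{q}{\ell}$ (from $0$ to $1$ across $\ell=q-M-1\to q-M$, and from $1$ to $\Dxi{q}$ across $\ell=Q-M\to Q-M+1$), and split into the same three cases on $j$. The only cosmetic difference is that you verify the right-hand side directly via the identity $(\eta_j-\eta_{j-1})\cdot\Delta_k=\eta_{j+k}-\eta_{j-1}$, whereas the paper routes the same computation through the pre-established formula~\eqref{eq:H_and_G_better}; incidentally, your correct labeling of the third case as $Q-M+1\le j\le q$ fixes an apparent typo in the paper's case~3.
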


\begin{proof}
As $g \geq Q-M+1$, it holds that $R_q = 0$, and thus \eqref{eq:dst} reduces to 
\[
\textstyle	\DST{q}{j} 
	= \sum_{\ell = j}^{q-1}  \eta_\ell \cdot 
		\left( \DB{q,}{\ell+1} - \DB{q}{\ell} \right).
\]
As in the proof of \autoref{lem:dual-tech}, 
to further evaluate $\DST{q}{j}$, it is useful to trace values $\DB{q}{j}, \DB{q,}{j+1}, \dots,$ $ \DB{q}{q}$
(cf.~\autoref{fig:dual-b-g}), where the increases of these values contribute to $\DST{q}{j}$. 
We also note that for $q \geq Q-M+1$, the possible increases are from $0$ to $1$ 
(between $\DB{q,}{q-M-1}$ and $\DB{q,}{q-M}$) and 
from $1$ to $\Dxi{q}$ (between $\DB{q,}{Q-M}$ and $\DB{q,}{Q-M+1}$).
We consider three cases.
\begin{enumerate}
\item $j \leq q-M-1$. 
Then $\DB{q}{j} = 0$, and 
\begin{align*}
	\DST{q}{j} 
	& = \eta_{Q-M} \cdot \left( \DB{q}{q} - \DB{q,}{q-1} \right)
		+ \eta_{q-M-1} \cdot \left( \DB{q,}{q-M} - \DB{q,}{q-M-1} \right) \\
	& = \eta_{Q-M} \cdot \left( \Dxi{q} - 1 \right)
		+ \eta_{q-M-1}  \\
	& = \eta_{q} + \eta_{q-M-1} -\eta_{j-1} + \eta_{j-1} - \eta_{j-1} \cdot \DB{q}{j}. 
\end{align*}
The lemma follows as 
$(\eta_{j} - \eta_{j-1}) \cdot \DG{q}{j} = \eta_{q-M-1} - \eta_{j-1}$ (see~\eqref{eq:H_and_G_better}).

\item $j \in \{ q-M, \ldots, Q-M \}$. 
Then $\DB{q}{j} = 1$, and 
\begin{align*}
	\DST{q}{j} 
	& = \eta_{Q-M} \cdot \left( \DB{q}{q} - \DB{q,}{q-1} \right) \\
	& = \eta_{Q-M} \cdot \left( \Dxi{q} - 1 \right) \\
	& = \eta_{q} - \eta_{j-1} \cdot \DB{q}{j} + \eta_{j-1}.
\end{align*}
The lemma follows as 
$(\eta_{j} - \eta_{j-1}) \cdot \DG{q}{j} = 0$ (see~\eqref{eq:H_and_G_better}).

\item $j \in \{ q-M, \ldots, Q-M \}$. 
Then $\DB{q}{j} = \Dxi{q} = 1 + \delta^{q-Q+M}$, and 
\begin{align*}
	\DST{q}{j} 
	= 0 
	& = \eta_{q} + \eta_{q+j-Q+M-1} - \eta_{q} - \eta_{j-1} \cdot (1 + \delta^{q-Q+M}) + \eta_{j-1} \\
	& = \eta_{q} + \eta_{q+j-Q+M-1} - \eta_{q} - \eta_{j-1} \cdot \DB{q}{j} + \eta_{j-1} .
\end{align*}
The lemma follows as 
$(\eta_{j} - \eta_{j-1}) \cdot \DG{q}{j} = \eta_{q+j-Q+M-1} - \eta_{q}$ (see~\eqref{eq:H_and_G_better}).
\qedhere
\end{enumerate}
\end{proof}

\subparagraph{Showing Inequality (\ref{eq:dual-wfqj2}).}

We show that \eqref{eq:dual-wfqj2} holds with equality.
Using \autoref{lem:dual-tech-2} and~\eqref{eq:GH_relation}, we obtain
\begin{align*}
	\DST{q}{j} 
		& \textstyle  + \eta_{j-1} \cdot \DHH{q}{j} -\eta_j \cdot  \DG{q}{j}  \\
		&= \eta_q + (\eta_{j} - \eta_{j-1}) \cdot \DG{q}{j}- \eta_{j-1} \cdot \DB{q}{j} + \eta_{j-1}
			- (\eta_{j} - \eta_{j-1}) \cdot \DG{q}{j} - \eta_{j-1} + \eta_{j-1} \cdot \DB{q}{j} \\
		&= \eta_q.
\end{align*}

\subparagraph{Showing Inequality (\ref{eq:dual-wsqj2}).}

Using \autoref{lem:dual-tech-2}, \eqref{eq:GH_relation}, and the definition of $\DE{q}$, we obtain
\begin{align*}
	\DST{q}{j} 
		& \textstyle  + \eta_{j-1} \cdot \DB{q}{j} - \DE{q}  \\
		&= \eta_q + (\eta_{j} - \eta_{j-1}) \cdot \DG{q}{j}- \eta_{j-1} \cdot \DB{q}{j} + \eta_{j-1}
			+ \eta_{j-1} \cdot \DB{q}{j} - \eta_{q} - \eta_{q-M-1} \\
		&= (\eta_{j} - \eta_{j-1}) \cdot \DG{q}{j} + \eta_{j-1} - \eta_{q-M-1} \geq 0.
\end{align*}
where the last inequality follows by \eqref{eq:G_relation}.


\section{Tightness of the Analysis}

The analysis of our algorithms is tight as proven below. For the deterministic
one, we additionally show that choosing $\omega$ different from $0$ does not
help.

\begin{theorem}
For any $\gamma$, there are $\gamma$-resettable scheduling problems, such that
for any $\omega \in (-1,0]$, the competitive ratio of $\Mimic(\gamma,\omega)$ is at least $3+\gamma$.
\end{theorem}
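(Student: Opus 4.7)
For each $\gamma \ge 0$ I would exhibit a single simple $\gamma$-resettable scheduling problem on which a one-request adversarial family forces the competitive ratio of $\Mimic(\gamma,\omega)$ arbitrarily close to $3+\gamma$, uniformly over $\omega \in (-1,0]$. A convenient choice is the TRP on the nonnegative ray $[0,\infty)$ with the following artificial adjustment of the reset: after any $\tau$-schedule the server is returned to the origin, but this reset is stipulated to last exactly $\gamma\tau$ units of time (recovering standard TRP on a line for $\gamma=1$; for other $\gamma$ the reset is abstract but still satisfies the resettability axiom). All three properties of Section~\ref{sec:resettable} hold by construction, so this problem is $\gamma$-resettable.

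Fix any $\omega \in (-1,0]$, set $\alpha=2+\gamma$, $\tau_k=\alpha^{k+\omega}$, and pick any $k$ with $\tau_k \ge 1$. The adversary releases an anchor request $r_0$ at position $1$ of weight $\epsilon$ at time $0$, ensuring $\min(\I)=1$, together with a heavy request $r_1$ of weight $W$ at position $\tau_k$ released at time $\tau_k+\epsilon'$; here $\epsilon,\epsilon'>0$ are arbitrarily small and $W$ is arbitrarily large. The offline optimum leaves the origin at time $\epsilon'$, passes $r_0$ at time $1+\epsilon'$, and arrives at $r_1$ exactly at its release time $\tau_k+\epsilon'$, giving $\cost_\OPT(\I) \le W(\tau_k+\epsilon')+\epsilon(1+\epsilon')$.

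To lower-bound $\cost_{\Mimic(\gamma,\omega)}(\I)$ I would argue that $r_1$ cannot be completed before time $(3+\gamma)\tau_k$. At time $\tau_k$, when the schedule $S_{\tau_k}$ of phase $k+1$ is committed, $r_1$ has not arrived, hence $r_1 \notin R(S_{\tau_k})$; by the delayed-execution rule $\Mimic$ neither modifies nor interrupts $S_{\tau_k}$ during $[\tau_k,2\tau_k)$ and completes no jobs during the reset window $[2\tau_k,\tau_{k+1})$. Therefore $r_1$ is first served in phase $k+2$ or later, and phase $k+2$ begins at $\tau_{k+1}=(2+\gamma)\tau_k$; since $r_1$ sits at distance $\tau_k$ from the origin, any schedule starting there needs at least $\tau_k$ further time to reach it. Thus $r_1$'s completion time is at least $(3+\gamma)\tau_k$, giving $\cost_{\Mimic}(\I) \ge W(3+\gamma)\tau_k$. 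Letting $\epsilon,\epsilon'\to 0$ and $W\to\infty$ drives $\cost_{\Mimic}(\I)/\cost_\OPT(\I) \to 3+\gamma$, and since the argument is symmetric in the scaling of $\tau_k$ it is uniform in $\omega$.

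The only delicate point is rigorously arguing that $\Mimic$ cannot improve on the ``wait for the next phase'' strategy for $r_1$. This rests on three ingredients: $S_{\tau_k}$ is a deterministic function of $\I_{\tau_k}$ and so excludes $r_1$; during a phase the algorithm is bound to the committed schedule, so exactly the jobs in $R(S_{\tau_k})$ are completed in $[\tau_k,2\tau_k)$; and any path from the origin requires at least $\tau_k$ time to reach distance $\tau_k$. Once these three facts are nailed down, the bound on the completion time of $r_1$ (and hence on the competitive ratio) is immediate.
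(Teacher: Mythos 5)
Your construction is the same idea as the paper's: a lightweight anchor fixes $\min(\I)=1$, and a heavy request is released just after a phase boundary $\tau_k$ so that \Mimic commits a $\tau_k$-schedule that ignores it, finishes that phase and the reset at time $\tau_{k+1}=(2+\gamma)\tau_k$, and needs roughly another $\tau_k$ to finish the heavy request --- yielding a factor $3+\gamma$ over \OPT.

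The gap is in your concrete instantiation when $\gamma<1$, and in particular $\gamma=0$. You put the problem on the ray $[0,\infty)$ and decree that the reset takes exactly $\gamma\tau$; but after a $\tau$-schedule the server may sit at distance up to $\tau$ from the origin, so for $\gamma<1$ a $\gamma\tau$-time reset is not realizable on the ray (and a $0$-resettable problem cannot have a moving server at all --- the paper's $0$-resettable example is machine scheduling). You acknowledge this by calling the reset ``abstract'', but once you abandon the ray metric you lose the justification for ``$r_1$ is at distance $\tau_k$, so any schedule from the origin needs $\tau_k$ more time'' --- which is exactly the step that supplies the extra $+\tau_k$ in $(3+\gamma)\tau_k$. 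The paper avoids this by never invoking a metric: it stipulates two jobs and two schedules $S_1,S_2$ that complete them precisely at their arrival times, and then argues that any $\tau_2$-schedule executed at time $\tau_2$ completes the heavy job no earlier than $\tau_2+(\tau_1+\varepsilon)$ because, within the schedule's own timeline, the job cannot finish before its arrival time $\tau_1+\varepsilon$. That argument uses only the delayed-execution axiom of $\gamma$-resettable scheduling and therefore works for every $\gamma\ge 0$. Replacing your geometric ``distance $\tau_k$'' bound with this arrival-time bound closes the gap. (Also, sending $W\to\infty$ is unnecessary: once $\varepsilon,\varepsilon'\to 0$ you may fix the heavy weight at $1$, as the paper does.)
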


\begin{proof}
We fix a small $\varepsilon > 0$ and let $\alpha = 2 + \gamma$. The input $\I$
contains two jobs: the first one of weight $\varepsilon$ that arrives at time
$1$, and second one of weight $1$ that arrives at time $\alpha^{1+\omega} +
\varepsilon$. We assume that there exists a schedule $S_1$ that serves the first
job at the time of its arrival and a schedule $S_2$ that serves both jobs at the
times of their arrivals. Therefore, $\cost_{\OPT}(\I) = \varepsilon \cdot 1 + 1
\cdot (\alpha^{1+\omega} + \varepsilon) = \alpha^{1+\omega} + 2 \cdot
\varepsilon$.

For analyzing the cost of $\Mimic$, note that at at time $1$, $\Mimic$ observes
the first job and learns the value of $\min(\I) = 1$. This is the sole purpose
of the first job: setting $\min(\I) = 1$ makes the algorithm miss the
opportunity to serve the second job early. At time $\tau_1 = \alpha^{1+\omega}$,
$\Mimic$ executes the $\tau_1$-schedule $S'_1$, which is schedule $S_1$
prolonged trivially to length $\tau_1$. Next, at time $\tau_2 =
\alpha^{2+\omega}$, $\Mimic$ executes the $\tau_2$-schedule $S'_2$, which is
schedule $S_2$ prolonged trivially to length $\tau_2$. This way it completes the
second job at time $\tau_2 + (\alpha^{1+\omega} + \varepsilon)$, and thus
$\cost_\Mimic(\I) \geq \tau_2 + \alpha^{1+\omega} + \varepsilon =
\alpha^{1+\omega} \cdot (1+\alpha) + \varepsilon$. By taking appropriately small
$\varepsilon > 0$, the ratio between $\cost_\Mimic(\I)$ and $\cost_\OPT(\I)$
becomes arbitrarily close to $1+\alpha = 3+\gamma$. 
\end{proof}

\begin{theorem}
For any $\gamma$, there are $\gamma$-resettable scheduling problems, such that
the competitive ratio of a randomized algorithm that runs $\Mimic(\gamma,\omega)$ 
with a random $\omega \in (-1,0]$ is at least $1+(1+\gamma)/\ln(2+\gamma)$.
\end{theorem}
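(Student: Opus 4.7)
The plan is to generalize the two-job adversary from the deterministic lower bound above. I would release a first job of negligible weight~$\varepsilon$ at time~$1$ (whose only role is to pin $\min(\I)=1$) and a heavy job of weight~$1$ at a deterministically chosen time $t>1$, working in a $\gamma$-resettable scheduling problem (e.g.\ TRP on the real line) that admits schedules completing both jobs exactly at their arrival times. With this construction, $\cost_\OPT(\I)\le \varepsilon+t$, and by exactly the reasoning used in the deterministic lower bound, the algorithm cannot complete the heavy job before the start of the first phase whose auxiliary schedule is long enough to reach it.

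Concretely, for a fixed realization of $\omega\in(-1,0]$, let $k(\omega)=\lceil \log_\alpha t-\omega\rceil$ be the smallest integer with $\tau_k=\alpha^{k+\omega}\ge t$, where $\alpha=2+\gamma$. No phase $k+1$ with $\tau_k<t$ can serve the heavy job (its auxiliary schedule has insufficient duration), while in phase $k(\omega)+1$ the schedule $S_{\tau_{k(\omega)}}$ serves it at relative time $t$, giving an absolute completion time of $\tau_{k(\omega)}+t$. Hence $\cost_{\Mimic(\gamma,\omega)}(\I)\ge \alpha^{k(\omega)+\omega}+t$.

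The core calculation is then to evaluate
\begin{equation*}
\E_\omega\!\left[\alpha^{k(\omega)+\omega}\right]
\end{equation*}
when $\omega$ is uniform on $(-1,0]$. Writing $a=\log_\alpha t$, the interval $(-1,0]$ splits into (at most) two sub-intervals on which $k(\omega)$ is constant, with the split-point determined by the fractional part of $a$. On each piece the integrand is a pure exponential $\alpha^{k+\omega}$ with $k$ constant, so the two integrals are elementary; the two contributions telescope and the fractional part of $a$ cancels, leaving $t\cdot(\alpha-1)/\ln\alpha$. Combining this identity with the additive $t$ and sending $\varepsilon\to 0$ drives the ratio $\E_\omega[\cost_\Mimic(\I)]/\cost_\OPT(\I)$ to $1+(\alpha-1)/\ln\alpha=1+(1+\gamma)/\ln(2+\gamma)$.

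The step I expect to require the most care is the two-piece integration and verifying that the fractional part of $\log_\alpha t$ really does cancel out; this is what makes the bound uniform in~$t$ and produces the clean constant. The remaining subtleties—ensuring that $S_{\tau_{k(\omega)}}$ truly includes the heavy job rather than paying the $\tau_{k(\omega)}$-penalty for dropping it (handled by an infinitesimal perturbation of $t$, using that including the job costs only $t\le \tau_{k(\omega)}$), and the measure-zero case $\log_\alpha t\in\mathbb{Z}$—are direct inheritances from the deterministic lower bound and do not affect the expectation.
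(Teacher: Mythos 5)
Your argument is correct and is essentially the paper's proof with one extra degree of freedom: the paper places the single unit-weight job at time $1=\min(\I)$ (so no separate $\varepsilon$-weight job is needed to pin $\min(\I)$), which forces $k(\omega)=1$ for every $\omega\in(-1,0]$ and collapses your two-piece integral to the single elementary $\int_{-1}^0 \bigl(\alpha^{1+\omega}+1\bigr)\,d\omega = 1+(\alpha-1)/\ln\alpha$. Your choice of a general $t>1$, together with the verification that the fractional part of $\log_\alpha t$ cancels, is correct but buys nothing over the paper's $t=1$ special case.
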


\begin{proof}
Let $\alpha = 2+\gamma$. The input $\I$ contains a single job of weight $1$
arriving at time $1$. We also assume that for any $\tau \geq 1$, there exists a
$\tau$-schedule $S_{\tau}$ that completes this job at time~$1$. Clearly,
$\cost_{\OPT}(\I) = 1 \cdot 1 = 1$.

At time $1$, $\Mimic$ observes the only job of $\I$ and learns that
$\min(\I) = 1$. Its sets $\tau_1 = \alpha^{1+\omega}$ and at time $\tau_1$ it
executes schedule $S_{\tau_1}$, thus completing the job at time $\tau_1 + 1$.
Therefore, $\cost_\Mimic(\I) = \int_{-1}^0 \tau_1+1 \;d\omega = \int_{-1}^0
\alpha^{1+\omega} + 1 \;d\omega = 1 + (\alpha-1) / \ln \alpha  
= 1+(1+\gamma)/\ln(2+\gamma)$. This implies the desired lower bound.
\end{proof}


\section{Flaw in the Randomized Lower Bound for DARP}
\label{sec:flaw}

The authors of~\cite{FiKrWe09} claim a lower bound of $3$ for randomized
$k$-DARP (for any $k \geq 1$), see Theorem 4 of~\cite{FiKrWe09}. Below we show a
flaw in their argument.

The construction given in the proof of their Theorem 4 uses Yao min-max principle
and is parameterized with a few variables, in particular with an integer $m$ and 
with a real number $v \in [0,1]$. Towards the end of the proof, they show that 
the competitive ratio of any randomized algorithm for the $k$-DARP problem is at least 
\[
    L_{m,v} = \frac{3m - 4km - 4km^2 + v^{\frac{m+1}{2-m}} \cdot (3 + (4km^2 + 4km + 6m + 6) \cdot v)}
        {-4 -m - 2km - 2km^2 + v^{\frac{m+1}{2-m}} \cdot (3 + (2km^2 + 2km + 4m + 4) \cdot v)}
\]
and they claim that there exists $v$, such that 
$L_{m,v} = 3$ when $m$ tends to infinity. However, for any fixed $k$ and any $v$ (also being a function
of $m$), by dividing numerator and denominator by $m^2$, we obtain that
\[
    \lim_{m \to \infty} L_{m,v} = 
    \frac{- 4k + v^{\frac{m+1}{2-m}} \cdot 4k \cdot v}
    {- 2k + v^{\frac{m+1}{2-m}} \cdot 2k \cdot v}
    = 2.
\]
That is, the proven lower bound is $2$ instead of $3$.


\bibliographystyle{plainurl}
\bibliography{trp}

\end{document}